\newif\ifspringer
\newif\ifsubmit
\newif\ifarxiv
\journalname{Journal of Mathematical Biology}
\definecolor{orcidlogocol}{HTML}{A6CE39}
    \renewcommand*{\bm}[1]{#1}%
    \renewcommand*{\nabla}{\symbol{"02207}}%
\newcommand\Wunderbar{\munderbar{W}}
\newcommand\Kunderbar{\munderbar{K}}
\newcommand\phiunderbar{\munderbar{\varphi}}
\newcommand\upOmega{\mathrm{\Omega}}
\newcommand\clOmega{\overline{\upOmega}}
\newcommand\upDelta{\mathrm{\Delta}}
\newcommand\superunderline[3]{\mkern#1mu\underline{\mkern-#1mu#3\mkern-#2mu}\mkern#2mu}
\newcommand\Wunderbar{\superunderline{2.5}{8}{W}}
\newcommand\Kunderbar{\superunderline{3}{4.5}{K}}
\newcommand\phiunderbar{\superunderline{0}{3}{\varphi}}
\newcommand\clOmega{\overline{\Omega}}
\newcommand\upOmega{\Omega}
\newcommand\upDelta{\Delta}
\DeclareMathOperator{\diag}{diag}
\theoremstyle{plain}
\newtheorem{theorem}{Theorem}
\newtheorem{lemma}[theorem]{Lemma}
\theoremstyle{definition}
\newtheorem{remark}[theorem]{Remark}
\newcommand\wrapcite[1]{({#1})}
\def\cite{\parencite}
\newcommand\wrapcite[1]{#1}
\newcommand\springerbreak{\ifspringer\linebreak\fi}
\title{Ecological invasion in competition-diffusion systems when the exotic species is either very strong or very weak}
\newcommand\MyAck{This research was performed in the context of the CNRS GDRI ReaDiNet. %
LC has been supported by the Meiji Institute for Advanced Study of Mathematical Sciences %
and by JSPS KAKENHI Grant-in-Aid for Research Activity Start-up No.~JP16H07254. %
MM has been partially supported by JSPS KAKENHI Grant No.~15K13462.}
\begin{document}

\ifspringer
\title{Ecological invasion in competition-diffusion systems when the exotic species is either very strong or very weak%
\thanks{\MyAck} 
}


\titlerunning{Ecological invasion in competition-diffusion systems}        

\author{Lorenzo Contento \and Danielle Hilhorst \and Masayasu Mimura}


\institute{L. Contento \at
              Meiji Institute for Advanced Study of Mathematical Sciences (MIMS), Meiji University, Tōkyō 164-8525, Japan\\
              \email{lorenzo.contento@gmail.com}           
           \and
           D. Hilhorst \at
           CNRS and Laboratoire de Mathématiques, University of Paris-Sud, 91405 Orsay Cedex, France
           \and
           M. Mimura \at
           Department of Mathematical Engineering, Faculty of Engineering, Musashino University, Tōkyō 135-8181, Japan
}

\date{Received: date / Accepted: date} 
\else
\author{Lorenzo Contento%
\thanks{Meiji Institute for Advanced Study of Mathematical Sciences, Meiji University, Tōkyō 164-8525, Japan \newline%
       \Letter\hspace{2mm}\texttt{lorenzo.contento@gmail.com}
       \ifarxiv
       \else
       \hspace{3mm}
       \href{https://orcid.org/0000-0002-7901-2172}{\textcolor{orcidlogocol}{\aiOrcid} \texttt{orcid.org/0000-0002-7901-2172}}
       \fi
}
\and Danielle Hilhorst
\thanks{CNRS and Laboratoire de Mathématiques, University of Paris-Sud, 91405 Orsay Cedex, France}
\and Masayasu Mimura
\thanks{Department of Mathematical Engineering, Musashino University, Tōkyō 135-8181, Japan}
}
\fi

\maketitle

\begin{abstract}
    Reaction-diffusion systems with a Lotka-Volterra-type reaction \springerbreak term,
    also known as competition-diffusion systems,
    have been used to investigate
    the dynamics of the competition
    among $m$ ecological species
    for a limited resource necessary to their survival and growth.
    Notwithstanding their rather simple mathematical structure,
    such systems may display quite interesting behaviours.
    In particular, while for $m=2$ no coexistence of the two species is usually possible,
    if $m \ge 3$ we may observe coexistence of all or a subset of the species,
    sensitively depending on the parameter values.
    Such coexistence can take the form of very complex spatio-temporal patterns and oscillations.

    Unfortunately,
    at the moment there are no known tools
    for a complete analytical study of such systems for $m \ge 3$.
    This means that establishing general criteria for the occurrence of coexistence appears to be very hard.
    In this paper we will instead give some criteria for the non-coexistence of species,
    motivated by the ecological problem of the invasion of an ecosystem by an exotic species.
    We will show that
    when the environment is very favourable to the invading species
    the invasion will always be successful
    and the native species will be driven to extinction.
    On the other hand, if the environment is not favourable enough,
    the invasion will always fail.
\ifspringer
\keywords{competition-diffusion system \and ecological invasion \and competitive exclusion \and large-time behaviour \and singular limit \and comparison principle}
\subclass{35Q92 \and 92D25 \and 35K57 \and 35B25 \and 35B40 \and 35B51}
\else
\medskip
\newline
\noindent
{\scshape Keywords}
\hspace{2mm}
competition-diffusion system \textperiodcentered\ifarxiv\ \fi ecological invasion \textperiodcentered\ifarxiv\ \fi competitive exclusion \textperiodcentered\ifarxiv\ \fi large-time behaviour \textperiodcentered\ifarxiv\ \fi singular limit \textperiodcentered\ifarxiv\ \fi comparison principle
\medskip
\newline
\noindent
{\scshape Mathematics Subject Classification}
\hspace{2mm}
35Q92 \textperiodcentered\ifarxiv\ \fi 92D25 \textperiodcentered\ifarxiv\ \fi 35K57 \textperiodcentered\ifarxiv\ \fi 35B25 \textperiodcentered\ifarxiv\ \fi 35B40 \textperiodcentered\ifarxiv\ \fi 35B51
\fi
\end{abstract}

\section{Introduction}
The understanding of the mechanisms
behind the rich biodiversity observed in nature
is a central problem in theoretical ecology.
It is a generally accepted fact that when two or more species
are competing for the same limited resources
in a constant and homogeneous environment
which is isolated from external influences,
they cannot coexist and all but one species will become extinct;
this is known as the \emph{competitive-exclusion principle}
and has been experimentally confirmed
for cultures of microorganisms \wrapcite{\cite{gause}}.
However, in real ecosystems a high number of coexisting species
is often observed also in places where resources are scarce.
A famous example of this apparent contradiction with the principle
is Hutchinson's paradox of the plankton \wrapcite{\cite{plankton}}:
a high number of phytoplankton species are able to coexist,
even if they all compete for the same resources.
Traditionally theoretical ecologists have explained this biodiversity
by observing that natural environments are inhomogeneous in space and/or time,
so that the principle does not apply.
Thus, even species which are competing for the same resource may coexist,
each being dominant in a particular zone or season,
without any equilibrium being reached.

Mathematical models for the competition between species
can aid in the understanding of this problem.
In the case where only two species are present,
it has been shown that a reaction-diffusion system
with Lotka-Volterra-like reaction terms
(from here on called a \emph{competition-diffusion system})
with constant parameters (i.e., modeling a homogeneous environment)
always displays competitive exclusion
if the space domain is convex \wrapcite{\cite{kishimoto,hirsch}}.
Non-convex domains may allow for stable coexistence equilibria
in which the species segregate spatially \wrapcite{\cite{matanomimura}},
but this can be interpreted ecologically as being due to immigration effects,
a violation of the hypotheses of the competitive-exclusion principle.
Another example of a mechanism which leads to coexistence
is the addition of cross-diffusion \wrapcite{\cite{shigesada}};
since this amounts to the species avoiding each other and nearly not competing,
it is again a failure of the principle's hypotheses.

It has been recently shown that, when three or more species are considered,
dynamical coexistence is possible even in convex homogeneous environments
with only random dispersal \wrapcite{\cite{morozov,tohmaPaper,contentoThesis,contento}}.
This is due to the effect of indirect competition between the species,
under the form of the so-called \emph{cyclic competition}.
The competition-diffusion system in this case has the form
\begin{equation*}
    \left\{
    \begin{alignedat}{6}
        u_t &= \, \upDelta u &&+ (r_1 &&- u &&- b_{12} v &&- b_{13} w&&) \, u, \\
        v_t &= \, \upDelta v &&+ (r_2 &&- v &&- b_{21} u &&- b_{23} w&&) \, v, \\
        w_t &= \, \upDelta w &&+ (r_3 &&- w &&- b_{31} u &&- b_{32} v&&) \, w.
    \end{alignedat}
    \right.
\end{equation*}
In particular,
in
\ifspringer
\cite{tohmaPaper}, \cite{contentoThesis} and \cite{contento}
\else
\wrapcite{\cite{tohmaPaper,contentoThesis,contento}}
\fi
the following ecological situation is considered.
An ecosystem which is inhabited by two native species $u$ and $v$
which are usually unable to coexist
is invaded by a third, exotic species $w$ from outside.
The parameter $r_3$ represents the suitability of the new environment for the invader.
Intuitively, if $r_3$ is very small the invasion should fail,
while if $r_3$ is very large the two native species should be supplanted by $w$.
Then, coexistence is possible only for intermediate values of $r_3$.

This line of reasoning can be extended
to the general case in which we have $m$ different competing species.
Let us choose one species,
which without loss of generality can always be thought to be the $m$-th one.
If $r_m$, the intrinsic growth rate of the $m$-th species,
is very large,
then the $m$-th species will be able to invade
an ecosystem occupied by the first $m-1$ species,
completely replacing them
(see the numerical simulation in Figure~\ref{fig:rm_large}).
If on the other hand $r_m$ is very small,
the invasion will not succeed
and the $m$-th species will go extinct
(see Figure~\ref{fig:rm_small}).
Note that if $m > 3$ the remaining species may still be able to coexist.
Then, coexistence of all species is possible only for intermediate values of $r_m$,
when the invasion by the $m$-th species is successful
but its strength is not sufficient to drive the native species to extinction
(see Figure~\ref{fig:rm_intermediate}).
\begin{figure}
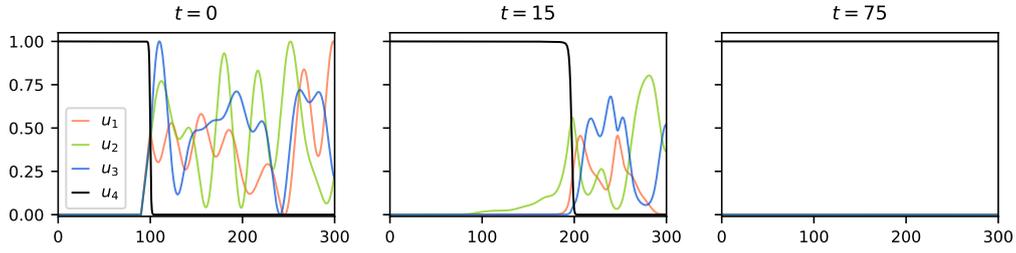
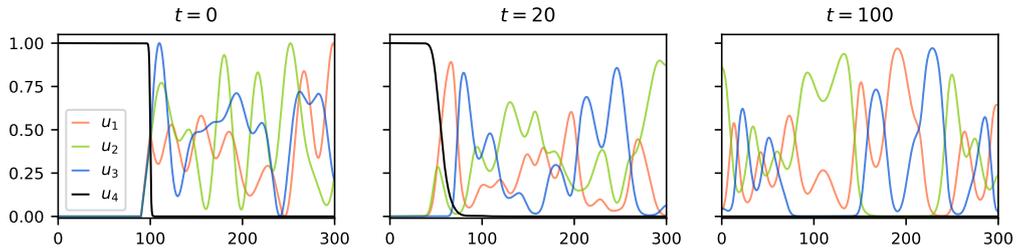
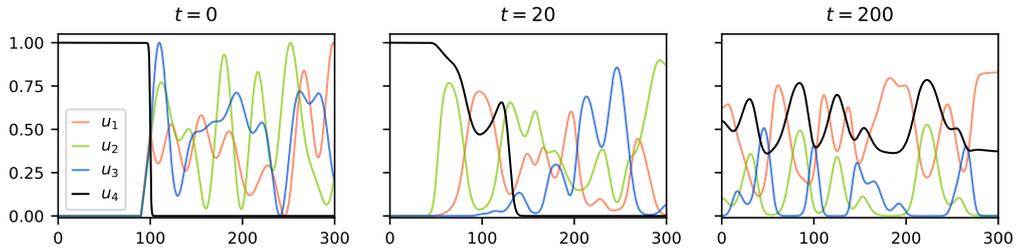

    \centering
    \ifspringer
    \subfloat[Simulation for $r_4=7$. The invasion is successful and the native species are driven extinct.]{%
        \centering%
        \ifsubmit
        \includegraphics[width=\textwidth]{{{invasion_4_species_rm=7}}}%
        \else
        \includegraphics[width=\textwidth]{{{graphics/invasion_4_species_rm=7}}}%
        \fi
        \label{fig:rm_large}%
    }

    \subfloat[Simulation for $r_4=0.3$. %
             The invasion fails and $u_4$ goes extinct. %
             For this particular choice of parameters, %
             the remaining species are able to coexist %
             and no additional extinction occurs.]{%
        \centering%
        \ifsubmit
        \includegraphics[width=\textwidth]{{{invasion_4_species_rm=0.3}}}%
        \else
        \includegraphics[width=\textwidth]{{{graphics/invasion_4_species_rm=0.3}}}%
        \fi
        \label{fig:rm_small}%
    }

    \subfloat[Simulation for $r_4=1$. %
             The invasion succeeds, but the native species are not completely replaced %
             and coexist with the invader.]{%
        \centering%
        \ifsubmit
        \includegraphics[width=\textwidth]{{{invasion_4_species_rm=1}}}%
        \else
        \includegraphics[width=\textwidth]{{{graphics/invasion_4_species_rm=1}}}%
        \fi
        \label{fig:rm_intermediate}%
    }
    \else
    \begin{subfigure}[b]{0.9\linewidth}
        \centering
        \ifarxiv
        \includegraphics[width=\textwidth]{{{figure1a}}}
        \else
        \includegraphics[width=\textwidth]{{{graphics/invasion_4_species_rm=7}}}
        \fi
        \caption{Simulation for $r_4=7$.
                 The invasion is successful
                 and the native species are driven extinct.}
        \label{fig:rm_large}
    \end{subfigure}

    \begin{subfigure}[b]{0.9\linewidth}
        \centering
        \ifarxiv
        \includegraphics[width=\textwidth]{{{figure1b}}}
        \else
        \includegraphics[width=\textwidth]{{{graphics/invasion_4_species_rm=0.3}}}
        \fi
        \caption{Simulation for $r_4=0.3$.
                 The invasion fails and $u_4$ goes extinct.
                 For this particular choice of parameters,
                 the remaining species are able to coexist
                 and no additional extinction occurs.}
        \label{fig:rm_small}
    \end{subfigure}

    \begin{subfigure}[b]{0.9\linewidth}
        \centering
        \ifarxiv
        \includegraphics[width=\textwidth]{{{figure1c}}}
        \else
        \includegraphics[width=\textwidth]{{{graphics/invasion_4_species_rm=1}}}
        \fi
        \caption{Simulation for $r_4=1$.
                 The invasion succeeds, but the native species are not completely replaced
                 and coexist with the invader.}
        \label{fig:rm_intermediate}
    \end{subfigure}
    \fi
    \caption{Numerical simulation of an ecosystem originally inhabited
             by three native species $u_i$, $i=1,2,3$,
             which is being invaded by a fourth exotic species $u_4$
             from the left side.
             Species densities are expressed
             as a fraction of the corresponding carrying capacity,
             i.e., the maximum density of each given species
             the environment can support in absence of its competitors.
             Each row represents a run with the same initial conditions
             but different values of $r_4$,
             the intrinsic growth rate of $u_4$.
             The other parameters are given by
             $d_1=1.90$, $d_2=1.85$, $d_3=0.89$, $d_4=1.70$,
             $r_1=r_2=r_3=1$,
             $b_{11}=b_{22}=b_{33}=b_{44}=1$,
             $b_{12}=0.70$, $b_{13}=1.67$, $b_{14}=0.44$,
             $b_{21}=1.73$, $b_{23}=0.81$, $b_{24}=0.16$,
             $b_{31}=0.26$, $b_{32}=1.81$, $b_{34}=0.99$,
             $b_{41}=0.78$, $b_{42}=0.10$, $b_{43}=0.94$.
             \ifarxiv
             \else
             See the electronic supplementary material for animations of these simulations,
             together with the additional cases $r_4=0.6$ and $r_4=2$.
             \fi
             }
\end{figure}

In this paper we are mainly concerned
with studying mathematically (instead of just relying on numerical simulations)
the dependence of the system's behaviour on the parameter $r_m$.
We will only consider the extreme cases
in which such parameter is very large or very small.
The intermediate value case,
while very interesting since it allows for coexistence of all species,
is much more challenging to study analytically and will not be considered here.
In Section~\ref{sec:basic}
we recall the basic properties of the solutions
of the $m$-species competition-diffusion system.
In Section~\ref{sec:scalar}
we will study the scalar case $m=1$,
i.e., the Fisher-KPP equation,
and its limiting behaviour.
In Section~\ref{sec:rm_large}
we consider the case in which $r_m$ is large
and show that the first $m-1$ species become extinct.
We study this case first as a singular limit problem,
keeping the time instant fixed and letting $r_m$ go to infinity,
and then as a large-time problem,
choosing $r_m$ sufficiently large but finite
and studying the behaviour of the solutions as time goes to infinity.
In Section~\ref{sec:rm_small}
we study in the same way the case
where $r_m$ is small and the $m$-th species disappears.

\section{Basic properties of the $m$-species competition-diffusion system}
\label{sec:basic}
Let us consider
the following initial value problem {(P)\label{eq:cds}}
for the $m$-species competition-diffusion system:
\begin{gather}
    \partial_t u_i
    =
    d_i \, \upDelta u_i + \left( r_i - \sum_{j=1}^{m} b_{ij} u_j \right) \, u_i
    \quad
    \text{in } \upOmega\times\left(0,\infty\right),
    \quad
    \ifspringer
    \else
    \text{for all }
    \fi
    i=1,\dots,m,
    \tag{CD} \label{eq:cds:pde}
    \\
    \partial_\nu \bm{u} = 0
    \quad \text{on }\partial\upOmega\times\left(0,\infty\right),
    \tag{BC} \label{eq:cds:bc}
    \\
    \bm{u}(x,0)=\bm{u}_{0}(x)
    \quad \text{for all}\ x\in\upOmega,
    \tag{IC} \label{eq:cds:ic}
\end{gather}
where
\begin{itemize}
\item $\upOmega$ is an open and bounded subset of $\mathbb{R}^{n}$ with smooth boundary $\partial\upOmega$;
\item $\bm{u} = (u_1,\dots,u_m) :\upOmega\times\left[0,\infty\right)\to\mathbb{R}^{m}$ is the vector-valued function representing the species densities;
\item $d_i > 0$, $i=1,\dots,m$,
      are the diffusion coefficients,
      representing the degree of motility of each species;
\item $r_{i} > 0$, $i=1,\dots,m$,
      are the intrinsic growth rates,
      representing the rate of growth of each species in absence of competition (both inter- and intra-specific);
\item $b_{ii} > 0$, $i=1,\dots,m$,
      are the intra-specific competition coefficients,
      which measure the strength of the competition between members of the same species;
\item $b_{ij} \ge 0$, $i,j=1,\dots,m$, $i \ne j$,
      are the inter-specific competition coefficients,
      which measure the strength of the competition between members of different species;
\item \eqref{eq:cds:bc} are zero-flux boundary conditions,
      where $\nu$ is the unit vector normal to $\partial\upOmega$,
      which model the fact that the ecosystem is closed and emigration/immigration is impossible;
\item $\bm{u}_{0}\in \mathcal{C}(\clOmega,\mathbb{R}^{m})$ is a given initial condition such that $\bm{u}_{0} \ge 0$.
\end{itemize}
\begin{remark}
    In this paper we are concerned with the case of competition in a homogeneous environment
    and thus we will suppose that all parameters are constant in space and time.
\end{remark}
\begin{remark}
    The value $r_{i}/b_{ii}$ is the carrying capacity of the $i$-th species,
    i.e., the maximum density of that species that the ecosystem can support
    in absence of inter-specific competition.
    In particular, if only one species is present,
    it will eventually reach its carrying capacity.
    This is a well known property of \hyperref[eq:cds]{(P)} for $m = 1$,
    i.e., the initial value problem for the Fisher-KPP equation,
    and will be reviewed in Section~\ref{sec:scalar}.
\end{remark}

Regarding the solutions of problem \hyperref[eq:cds]{(P)},
we have the following result.
\begin{theorem}
    \label{thm:existence}
    Problem \hyperref[eq:cds]{(P)}
    has a unique classical solution,
    namely a function $\bm{u}$ such that
    \begin{equation*}
        \bm{u} \in \mathcal{C}(\clOmega \times \left[0,\infty\right), \mathbb{R}^m)
                   \cap
                   \mathcal{C}^{2,1}(\clOmega \times \left(0,\infty\right), \mathbb{R}^m),
    \end{equation*}
    which satisfies \eqref{eq:cds:pde} pointwise,
    along with the boundary and initial conditions \eqref{eq:cds:bc} and \eqref{eq:cds:ic}, respectively.
    Moreover, for every $i=1,\dots,m$ we have
    \begin{equation}
        0 \le u_i
          \le \max\left\{ \frac{r_{i}}{b_{ii}},{\left\lVert u_{0,i}\right\rVert}_{\mathcal{C}(\clOmega)}\right\}
          \eqqcolon M_i
        \quad
        \text{in } \clOmega \times \left[ 0, \infty \right).
        \label{eq:boundedness}
    \end{equation}
    In particular,
    for every $i=1,\dots,m$
    such that $u_{0,i}$ is not identically equal to zero,
    we have that
    \begin{equation}
        u_i > 0
        \quad
        \text{in }  \clOmega \times \left( 0, \infty \right),
        \label{eq:solution_is_strictly_positive}
    \end{equation}
    and, for every $i=1,\dots,m$
    such that $u_{0,i} \le r_i / b_{ii}$
    and $u_{0,i}$ is not identically equal to $r_i / b_{ii}$,
    we have that
    \begin{equation*}
        u_i < \frac{r_i}{b_{ii}}
        \quad
        \text{in }  \clOmega \times \left( 0, \infty \right).
    \end{equation*}
\end{theorem}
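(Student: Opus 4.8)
The plan is to combine standard parabolic theory for local existence with comparison arguments that exploit the Lotka--Volterra sign structure to obtain the a priori bounds, global existence, and the two strict inequalities. Since the diffusion is linear with constant coefficients $d_i$, the system \eqref{eq:cds:pde} is semilinear with a smooth (polynomial) reaction term. Writing it as an abstract evolution equation driven by the analytic semigroup generated by the diagonal operator $\diag(d_i\upDelta)$ under Neumann conditions, a fixed-point argument yields a unique local mild solution on a maximal interval $[0,T_{\max})$; local Lipschitz dependence gives uniqueness, and parabolic smoothing upgrades the solution to the claimed regularity on $\clOmega\times(0,T_{\max})$, with continuity up to $t=0$.

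Next I would establish the a priori bounds \eqref{eq:boundedness} by freezing the off-diagonal coupling and comparing each component scalarly. Fix $T<T_{\max}$; on $\clOmega\times[0,T]$ the solution is continuous, hence each coefficient $c_i \coloneqq r_i-\sum_{j} b_{ij}u_j$ is bounded. Viewing the $i$-th equation as the linear scalar problem $\partial_t u_i = d_i\upDelta u_i + c_i u_i$ with data $u_{0,i}\ge 0$ and Neumann boundary condition, the parabolic maximum principle gives $u_i\ge 0$. With nonnegativity of every component in hand, the constant $M_i$ becomes a supersolution of the $i$-th equation regarded as a scalar semilinear problem with reaction $g_i(x,t,s)=(r_i-b_{ii}s-\sum_{j\ne i}b_{ij}u_j)\,s$: indeed $M_i\ge r_i/b_{ii}$ forces $r_i-b_{ii}M_i-\sum_{j\ne i}b_{ij}u_j\le 0$, so $g_i(x,t,M_i)\le 0$, while also $M_i\ge u_{0,i}$; scalar comparison then yields $u_i\le M_i$. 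These bounds are uniform in $T$, so the standard blow-up criterion forces $T_{\max}=\infty$, giving global existence.

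Finally I would prove the strict inequalities via the strong maximum principle together with the Hopf boundary lemma. For \eqref{eq:solution_is_strictly_positive}, $u_i\ge 0$ solves the linear equation with bounded coefficient $c_i$; if $u_i$ vanished at an interior point $(x_0,t_0)$ with $t_0>0$ it would vanish identically on $\clOmega\times(0,t_0]$, contradicting $u_{0,i}\not\equiv 0$, and the Hopf lemma together with $\partial_\nu u_i=0$ excludes a zero on $\partial\upOmega$, so $u_i>0$. For the strict upper bound, set $z\coloneqq r_i/b_{ii}-u_i\ge 0$; using $r_i-b_{ii}u_i=b_{ii}z$ a direct computation gives $\partial_t z-d_i\upDelta z+b_{ii}u_i\,z=\bigl(\sum_{j\ne i}b_{ij}u_j\bigr)u_i\ge 0$, so $z$ is a nonnegative supersolution of a linear operator with bounded coefficient $b_{ii}u_i$, and since $z\not\equiv 0$ at $t=0$ the same strong-maximum-principle-plus-Hopf argument gives $z>0$, i.e.\ $u_i<r_i/b_{ii}$.

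The main obstacle is the apparent circularity between boundedness and the decoupled comparison: the scalar maximum principle needs bounded coefficients, yet boundedness is exactly what is being proved. This is resolved by working on an arbitrary finite subinterval $[0,T]\subset[0,T_{\max})$, where continuity already guarantees bounded coefficients, and then noting that the resulting bounds are independent of $T$, so they propagate to $T_{\max}$ and rule out blow-up. The only other delicate point is the sign bookkeeping ensuring that, on the relevant level set, the reaction makes $M_i$ a supersolution and makes $z$ a supersolution; this is precisely where the competitive structure $b_{ij}\ge 0$ enters.
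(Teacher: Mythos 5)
Your proposal is correct and follows essentially the same route as the paper: local existence via analytic-semigroup theory (the paper simply cites Lunardi, Proposition 7.3.2), nonnegativity by comparison with $0$, the upper bound by comparison with the constant $M_i$ (using nonnegativity of all components and $b_{ij}\ge 0$), global existence from the resulting uniform bounds, and the strict inequalities from the strong comparison/maximum principle. The only difference is one of detail: you spell out what the paper leaves implicit, namely the finite-interval argument that breaks the circularity between bounded coefficients and the a priori bounds, and the explicit supersolution computation for $z = r_i/b_{ii} - u_i$.
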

\begin{proof}
    The existence of a local solution with the required regularity
    follows from \cite[Proposition 7.3.2]{lunardi}.
    In order to show global existence,
    it is then sufficient to prove that \eqref{eq:boundedness} holds.
    First, for every $i=1,\dots,m$,
    we apply the comparison principle
    to the $i$-th equation in \eqref{eq:cds:pde}.
    By comparing $u_i$ with the constant function $0$
    we can show that $u_i \ge 0$ on the domain of definition.
    Then, by using the non-negativity of all $u_i$, $i=1,\dots,m$,
    we can apply the comparison principle again
    for $u_i$ and the constant function $M_i$,
    completing the proof of \eqref{eq:boundedness}.
    The last part of the theorem follows from the strong comparison principle.
\ifspringer\qed\fi
\end{proof}

\section{The scalar case}
\label{sec:scalar}
We start by studying the scalar case,
in which \hyperref[eq:cds]{(P)} reduces to the scalar Fisher-KPP equation.
We first recall the explicit solution for the case when diffusion is absent,
i.e., for a logistic-type ordinary differential equation.
Using this formula we give uniform in space bounds from above and below
for solutions of the Fisher-KPP equation with constant coefficients.
Finally, we introduce a non-constant term $K$
to the expression for the growth rate
and give bounds for the resulting scalar reaction-diffusion equation.
The results obtained for this last form of the Fisher-KPP equation
will then be applicable directly to each scalar equation in \hyperref[eq:cds]{(P)}
by taking $K$ to be equal to the total effect of inter-specific competition.

The logistic ordinary differential equation
models the growth of a population
in presence of intra-specific competition.
If the initial population is non-zero,
then the population density will tend monotonically to the carrying capacity.
For reasons that will become apparent later,
we also need to treat the case in which the growth rate is negative,
which results in the population going extinct.
These results are formalized in the lemma below,
along with singular limit results for the growth rate going to $\pm\infty$.
\begin{lemma}[Explicit form of logistic growth]
    \label{lem:logistic_growth}
    Let $W^{(r,b)}(t,W_0)$
    be the solution of the following initial value problem
    for the logistic-type ordinary differential equation
    \begin{equation}
      \left\{
      \begin{aligned}
        & W_t  = \left( r - b \, W \right) W, \quad \text{in }\left(0,\infty\right),
        \\
        & W(0) = W_0,
      \end{aligned}
      \right.
      \label{eq:logistic_growth_ode}
    \end{equation}
    where $b > 0$ and $W_0 \ge 0$.
    If $W_0 = 0$, then $W \equiv 0$.
    If $W_0 = r/b$, then $W \equiv r/b$.
    If $r = 0$, then
    \begin{equation*}
        W^{(0,b)}(t,W_0) = \frac{W_0}{b\,W_0\,t + 1}.
    \end{equation*}
    If $r \neq 0$ instead (either $r>0$ or $r<0$), we have
    \begin{equation*}
        W^{(r,b)}(t,W_0) = \frac{r/b}{ 1 + \left( \frac{r/b}{W_0} - 1 \right) e^{- r \, t} }. 
    \end{equation*}

    Moreover, if $r > 0$ we have that
    \begin{equation}
        \lim_{t\to\infty} W^{(r,b)}(t,W_0) = \frac{r}{b},
        \label{eq:logistic_growth_ode:limit}
    \end{equation}
    where the convergence is monotone.
    In particular, $W^{(r,b)}(t,W_0)$ is increasing in $t$ if $W_0 < r/b$
    and decreasing in $t$ if $W_0 > r/b$.
    If $r \le 0$ instead,
    then
    \begin{equation}
        \lim_{t\to\infty} W^{(r,b)}(t,W_0) = 0,
        \label{eq:logistic_growth_ode:limit_r_negative}
    \end{equation}
    where the convergence is monotone decreasing.

    Finally, for every $\delta > 0$ we have
    \begin{gather}
        \lim_{r\to-\infty} {\left \lVert W^{(r,b)}(t,W_0) \right \rVert}_{\mathcal{C}([\delta,\infty))} = 0,
        \label{eq:logistic_growth_ode:limit_for_r_to_minus_infty}
        \\
        \lim_{r\to\infty} {\left \lVert \frac{W^{(r,b)}(t,W_0)}{r/b} - 1 \right \rVert}_{\mathcal{C}([\delta,\infty))} = 0.
        \label{eq:logistic_growth_ode:limit_for_r_to_infty:scaled}
    \end{gather}
\end{lemma}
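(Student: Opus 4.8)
The plan is to prove each assertion by directly manipulating the two explicit formulas, treating the trivial cases $W_0 = 0$ and $W_0 = r/b$ first as constant solutions, then establishing the closed-form expressions, and only afterwards reading off the limits from those formulas. This is the natural order because every limit statement in the lemma is a routine consequence of the explicit solution, so the real content is verifying that the formulas solve \eqref{eq:logistic_growth_ode} and that the $r\to\pm\infty$ convergence is uniform on $[\delta,\infty)$.

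First I would treat the degenerate cases: if $W_0 = 0$ or $W_0 = r/b$, the right-hand side $(r - bW)W$ vanishes, so the constant functions $W \equiv 0$ and $W \equiv r/b$ are solutions, and uniqueness (from Lipschitz continuity of the reaction term on bounded sets) forces $W$ to equal them. Next, for $r = 0$ the equation reads $W_t = -bW^2$, which I would solve by separation of variables, integrating $-W^{-2}\,dW = b\,dt$ to obtain $W^{-1} = bt + W_0^{-1}$, i.e. the stated formula. For $r \neq 0$ I would either verify the claimed formula by direct differentiation (substituting it into \eqref{eq:logistic_growth_ode} and checking both sides agree) or derive it via partial fractions after separating variables; verification by substitution is cleaner since the formula is already given.

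Once the formulas are in hand, the large-time limits \eqref{eq:logistic_growth_ode:limit} and \eqref{eq:logistic_growth_ode:limit_r_negative} follow immediately. For $r > 0$ the exponential $e^{-rt} \to 0$, so the denominator tends to $1$ and $W \to r/b$; monotonicity comes from the sign of the factor $\bigl(\frac{r/b}{W_0} - 1\bigr)$, which is positive exactly when $W_0 < r/b$, making the denominator decrease (hence $W$ increase) in that case and conversely otherwise. For $r < 0$, writing $e^{-rt} = e^{|r|t} \to \infty$ shows the denominator blows up and $W \to 0$ monotonically; the case $r = 0$ is read off from the $r=0$ formula directly.

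\textbf{The main obstacle} will be the uniformity in the singular limits \eqref{eq:logistic_growth_ode:limit_for_r_to_minus_infty} and \eqref{eq:logistic_growth_ode:limit_for_r_to_infty:scaled}, since these are suprema over the \emph{unbounded} interval $[\delta,\infty)$ and I must control the formula simultaneously in $t$ and $r$. For $r \to -\infty$, I would bound $W^{(r,b)}(t,W_0)$ by its value at $t = \delta$ (using the monotone decrease established above) and show that $W^{(r,b)}(\delta, W_0) \to 0$ as $r \to -\infty$; the point is that the denominator contains $e^{|r|\delta}$, which dominates, so the whole expression is squeezed uniformly. For $r \to +\infty$, I would estimate $\bigl\lvert \frac{W}{r/b} - 1\bigr\rvert = \bigl\lvert \frac{(\frac{r/b}{W_0}-1)e^{-rt}}{1 + (\frac{r/b}{W_0}-1)e^{-rt}}\bigr\rvert$ and show the numerator's growth in $r$ (the factor $r/(bW_0)$) is beaten by $e^{-r\delta}$ uniformly over $t \ge \delta$; care is needed when $W_0 < r/b$ versus $W_0 > r/b$ so that the denominator stays bounded away from zero, but in both regimes the exponential decay in $r$ wins. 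I expect these two uniform estimates to be the only nonroutine step, everything else reducing to algebra on the explicit solution.
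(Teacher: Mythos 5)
Your proposal is correct, but there is nothing in the paper to compare it against: the paper states Lemma~\ref{lem:logistic_growth} without any proof, treating the explicit solution of the logistic ODE as classical, and moves straight on to the Fisher-KPP problem. Your write-up therefore supplies the standard argument that the paper leaves implicit, and it is sound on every point: constant solutions plus uniqueness dispose of $W_0=0$ and $W_0=r/b$; separation of variables (or direct substitution) yields the two formulas; the large-time limits and the monotonicity follow from the sign of $\frac{r/b}{W_0}-1$; for $r\to-\infty$ the monotone decrease reduces the supremum over $[\delta,\infty)$ to the value at $t=\delta$, which tends to zero because $e^{\lvert r\rvert\delta}$ in the denominator beats the factor $\lvert r\rvert/b$ in the numerator; and for $r\to\infty$, once $r>bW_0$ one has, uniformly for $t\ge\delta$,
\begin{equation*}
    \left\lvert \frac{W^{(r,b)}(t,W_0)}{r/b}-1 \right\rvert
    = \frac{\left(\frac{r}{bW_0}-1\right)e^{-rt}}{1+\left(\frac{r}{bW_0}-1\right)e^{-rt}}
    \le \left(\frac{r}{bW_0}-1\right)e^{-r\delta}
    \longrightarrow 0
    \quad\text{as } r\to\infty,
\end{equation*}
the denominator being $\ge 1$ there; your worry about distinguishing the regimes $W_0<r/b$ and $W_0>r/b$ evaporates, since for fixed $W_0$ the latter is excluded for all sufficiently large $r$. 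One caveat worth making explicit in your final write-up: the scaled limit \eqref{eq:logistic_growth_ode:limit_for_r_to_infty:scaled} fails when $W_0=0$ (then $W\equiv 0$ and the norm is identically $1$), and your estimate indeed uses $W_0>0$ through the factor $1/W_0$; so add that hypothesis, which the paper assumes implicitly by only ever invoking this limit with strictly positive initial values.
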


We now add diffusion to problem \eqref{eq:logistic_growth_ode},
considering the problem of the logistic growth of a population
which moves randomly in space.
The resulting partial differential equation is the well-known Fisher-KPP equation
and its limiting behaviour is essentially unchanged from \eqref{eq:logistic_growth_ode}:
if the growth rate is positive and the initial data non-zero,
then solutions will tend to the carrying capacity $r/b$;
if instead the growth rate is negative, the population will go extinct.
This is a well known property but nonetheless we give a full proof here,
since we believe that it is a good introduction to the $m$-species case that will be studied in the following sections.
First, we prove that the higher the growth rate the faster the population will grow
and then we give lower and upper bounds to constrain the solutions,
which allows us to finally prove the convergence to the limit solution.

We will denote by $w^{(r,b,d,0)}(x,t,w_0)$ the unique classical solution
of the initial value problem {(F-KPP)\label{eq:fkpp}}
for \eqref{eq:cds:pde} in the case $m=1$,
that is, the Fisher-KPP equation
\begin{equation}
    w_t = d \, \upDelta w + ( r - b \, w ) \, w
    \quad
    \text{in } \upOmega \times \left( 0, \infty \right),
    \label{eq:convergence_to_cc:fisherKPP}
\end{equation}
with zero-flux boundary conditions on $\partial\upOmega$
and non-negative initial conditions $w_0 \in \mathcal{C}(\clOmega)$.
The reason behind the final $0$ superscript will become clear later.
In the Fisher-KPP equation \eqref{eq:convergence_to_cc:fisherKPP}
we suppose that $d \ge 0$ and $b > 0$,
but as in the logistic equation \eqref{eq:logistic_growth_ode}
we allow $r$ to be arbitrary, possibly negative.
Even in the case of negative growth rates,
it is easy to check that the results of Theorem~\ref{thm:existence} (case $m=1$) still hold,
so that $w^{(r,b,d,0)}(x,t,w_0)$ is well-defined for any choice of $r\in\mathbb{R}$.

\begin{lemma}
    \label{lem:fisher-kpp_is_increasing}
    Let $w^{(r,b,d,0)}(x,t,w_0)$ be the solution of problem~\hyperref[eq:fkpp]{(F-KPP)}
    with initial data $w_0$ non-negative.
    Then, $w^{(r,b,d,0)}(x,t,w_0)$ is non-decreasing in $r$.
\end{lemma}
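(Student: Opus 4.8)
The plan is to deduce the claim from the comparison principle, exploiting the fact that the reaction term is monotone in the parameter $r$ once one knows that solutions stay non-negative. Fix two growth rates $r_1 \le r_2$ and set $w_1 \coloneqq w^{(r_1,b,d,0)}(\cdot,\cdot,w_0)$ and $w_2 \coloneqq w^{(r_2,b,d,0)}(\cdot,\cdot,w_0)$, so that each $w_k$ solves \eqref{eq:convergence_to_cc:fisherKPP} with its own growth rate $r_k$, subject to the same zero-flux boundary condition and the same non-negative initial datum $w_0$. The goal is to show $w_1 \le w_2$ throughout $\clOmega\times[0,\infty)$.

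First I would verify that $w_1$ is a subsolution of the Fisher-KPP equation governing $w_2$. Using that $w_1$ solves its own equation, and that $w_1 \ge 0$ by the boundedness estimate \eqref{eq:boundedness} of Theorem~\ref{thm:existence} (case $m=1$), one computes
\begin{equation*}
    \partial_t w_1 - d\,\upDelta w_1 - (r_2 - b\,w_1)\,w_1 = (r_1 - r_2)\,w_1 \le 0,
\end{equation*}
because $r_1 - r_2 \le 0$ and $w_1 \ge 0$. Hence $w_1$ satisfies the differential inequality $\partial_t w_1 \le d\,\upDelta w_1 + (r_2 - b\,w_1)\,w_1$ in $\upOmega\times(0,\infty)$, together with $\partial_\nu w_1 = 0$ on $\partial\upOmega\times(0,\infty)$ and $w_1(\cdot,0) = w_0 = w_2(\cdot,0)$, which is exactly the structure of a subsolution matching $w_2$ at the initial time.

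It then remains to invoke the comparison principle for the scalar equation \eqref{eq:convergence_to_cc:fisherKPP} with growth rate $r_2$. The nonlinearity $s \mapsto (r_2 - b\,s)\,s$ is locally Lipschitz, and both $w_1$ and $w_2$ are bounded (again by \eqref{eq:boundedness}), so that on the relevant range the reaction term is globally Lipschitz; a subsolution lying below a solution at $t=0$ and satisfying compatible Neumann data therefore remains below it for all later times. This yields $w_1 \le w_2$, i.e.\ $w^{(r,b,d,0)}(x,t,w_0)$ is non-decreasing in $r$. (When $d=0$ the statement simply reduces to pointwise-in-$x$ comparison of logistic ODEs and also follows from Lemma~\ref{lem:logistic_growth}.)

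The only delicate point — and really the crux of the argument — is the sign of the term $(r_1 - r_2)\,w_1$: the whole comparison hinges on $w_1 \ge 0$, which is precisely why the non-negativity conclusion of Theorem~\ref{thm:existence} must be used. If the solution were permitted to change sign, the reaction term would no longer be monotone in $r$ and the subsolution inequality, and hence the conclusion, could fail.
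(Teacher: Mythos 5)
Your proof is correct and is essentially the paper's own argument: both hinge on the computation that the difference in reaction terms is $(r_1-r_2)\,w \le 0$ thanks to the non-negativity of solutions from \eqref{eq:boundedness}, followed by the comparison principle with identical initial and boundary data. The only cosmetic difference is that you present the solution with the smaller rate as a subsolution of the equation with the larger rate, while the paper presents the solution with the larger rate as an upper solution of the equation with the smaller rate.
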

\begin{proof}
We will apply the comparison principle.
Let $r' > r''$ arbitrary.
The functions $w^{(r',b,d,0)}(x,t,w_0)$ and $w^{(r'',b,d,0)}(x,t,w_0)$
solve \eqref{eq:convergence_to_cc:fisherKPP}
with $r=r'$ and $r=r''$ respectively.
Moreover, since such solutions are non-negative by \eqref{eq:boundedness},
$w^{(r',b,d,0)}(x,t,w_0)$
is a upper solution of problem~\hyperref[eq:fkpp]{(F-KPP)} with $r=r''$.
Then, $w^{(r',b,d,0)}(x,t,w_0) \ge w^{(r'',b,d,0)}(x,t,w_0)$
for every $(x,t) \in \clOmega \times \left[0,\infty\right)$.
\ifspringer\qed\fi
\end{proof}

\begin{theorem}
    \label{thm:limits_for_fisherKPP}
    Let $w^{(r,b,d,0)}(x,t,w_0)$ be the classical solution of problem~\hyperref[eq:fkpp]{(F-KPP)}
    with $w_0$ non-negative and not identically equal to zero.
    Then,
    \begin{enumerate}
        \item if $r > 0$,
              the function $w^{(r,b,d,0)}(x,t,w_0)$
              converges to $r/b$
              uniformly on $\clOmega$ as $t \to \infty$;
        \item if $r \le 0$,
              the function $w^{(r,b,d,0)}(x,t,w_0)$
              converges to $0$
              uniformly on $\clOmega$ as $t \to \infty$;
        \item for any arbitrary $\delta > 0$, we have
              \begin{align}
                  \lim_{r \to -\infty} {\left \lVert w^{(r,b,d,0)}(x,t,w_0) \right \rVert}_{\mathcal{C}(\clOmega\times\left[\delta,\infty\right))} &= 0,
                  \label{eq:limits_for_fisherKPP:r_to_minus_infty}
                  \\
                  \lim_{r \to \infty} {\left \lVert \frac{w^{(r,b,d,0)}(x,t,w_0)}{r/b} - 1 \right \rVert}_{\mathcal{C}(\clOmega\times\left[\delta,\infty\right))} &= 0.
                  \label{eq:limits_for_fisherKPP:r_to_infty}
              \end{align}
    \end{enumerate}
\end{theorem}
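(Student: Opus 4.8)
The plan is to sandwich the Fisher-KPP solution between two spatially homogeneous solutions of the logistic ODE \eqref{eq:logistic_growth_ode} and then read off the asymptotics from the explicit limits collected in Lemma~\ref{lem:logistic_growth}. First I would set $\underline{W}_0 = \min_{\clOmega} w_0$ and $\overline{W}_0 = \max_{\clOmega} w_0$; since $w_0 \ge 0$ is not identically zero, $\overline{W}_0 > 0$. The constant-in-space functions $\underline{W}(t) = W^{(r,b)}(t,\underline{W}_0)$ and $\overline{W}(t) = W^{(r,b)}(t,\overline{W}_0)$ solve \eqref{eq:convergence_to_cc:fisherKPP} exactly (their Laplacian vanishes) together with the zero-flux boundary condition, and they bracket the initial data. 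Hence the comparison principle gives
\begin{equation*}
    \underline{W}(t) \le w^{(r,b,d,0)}(x,t,w_0) \le \overline{W}(t)
    \quad \text{in } \clOmega \times \left[0,\infty\right).
\end{equation*}

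Three of the four conclusions follow directly from this. For statement~(ii) ($r \le 0$), the bound $w \ge 0$ from \eqref{eq:boundedness} together with $\overline{W}(t)\to 0$ from \eqref{eq:logistic_growth_ode:limit_r_negative} squeezes $w$ to $0$ uniformly. For the first limit in~(iii) ($r\to-\infty$), the same upper bound and \eqref{eq:logistic_growth_ode:limit_for_r_to_minus_infty}, applied with the fixed datum $\overline{W}_0$, yield $\lVert w^{(r,b,d,0)} \rVert_{\mathcal{C}(\clOmega\times[\delta,\infty))} \le \lVert \overline{W} \rVert_{\mathcal{C}([\delta,\infty))} \to 0$. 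For statement~(i) ($r>0$) the upper bound tends to $r/b$ by \eqref{eq:logistic_growth_ode:limit}, and if $\underline{W}_0 > 0$ the lower bound does too, so the squeeze is immediate.

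The first real subtlety is the lower bound when the initial data vanish somewhere, so that $\underline{W}_0 = 0$ and the trivial lower bound degenerates; this affects both~(i) and the second limit in~(iii). Here I would invoke the strict positivity \eqref{eq:solution_is_strictly_positive}: for any $t_1 > 0$ the function $w(\cdot,t_1)$ is continuous and strictly positive on the compact set $\clOmega$, hence bounded below by some $c > 0$. Restarting the comparison at time $t_1$ against the spatially constant subsolution $W^{(r,b)}(t-t_1,c)$ gives $w^{(r,b,d,0)}(x,t,w_0) \ge W^{(r,b)}(t-t_1,c)$ for $t \ge t_1$, and \eqref{eq:logistic_growth_ode:limit} then drives this lower bound to $r/b$, completing~(i).

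The main obstacle is the second limit in~(iii) ($r\to\infty$), where the lower bound must be made \emph{uniform in $r$}: the positivity constant $c$ above a priori depends on $r$, whereas \eqref{eq:logistic_growth_ode:limit_for_r_to_infty:scaled} holds for a fixed datum. I would remove this dependence using monotonicity in $r$ (Lemma~\ref{lem:fisher-kpp_is_increasing}). Fixing a reference value $r_0 > 0$ and a time $t_1 \in (0,\delta)$, strict positivity gives $c := \min_{\clOmega} w^{(r_0,b,d,0)}(\cdot,t_1,w_0) > 0$, and monotonicity yields $w^{(r,b,d,0)}(\cdot,t_1,w_0) \ge c$ for all $r \ge r_0$, with $c$ now independent of $r$. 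Comparison from $t_1$ then gives, on $\clOmega \times [\delta,\infty)$,
\begin{equation*}
    \frac{W^{(r,b)}(t-t_1,c)}{r/b}
    \le \frac{w^{(r,b,d,0)}(x,t,w_0)}{r/b}
    \le \frac{\overline{W}(t)}{r/b},
\end{equation*}
and, since $t-t_1 \ge \delta - t_1 > 0$, both outer quantities tend to $1$ uniformly by \eqref{eq:logistic_growth_ode:limit_for_r_to_infty:scaled} applied to the fixed data $c$ and $\overline{W}_0$. This establishes the uniform convergence of the ratio to $1$ and finishes the proof.
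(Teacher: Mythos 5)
Your proof is correct and follows essentially the same route as the paper: sandwiching the solution between spatially homogeneous logistic ODE solutions via the comparison principle, using strict positivity \eqref{eq:solution_is_strictly_positive} at a positive time to obtain a positive lower datum, and invoking the monotonicity in $r$ of Lemma~\ref{lem:fisher-kpp_is_increasing} to make that datum independent of $r$ in the $r\to\infty$ limit. The only difference is organizational (the paper runs the lower comparison from a time $\tau\in(0,\delta)$ uniformly in all cases, while you restart only when the minimum of $w_0$ vanishes), which does not change the substance of the argument.
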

\begin{proof}
    The proof consists in finding a couple of uniform in space bounds
    for the solution of problem~\hyperref[eq:fkpp]{(F-KPP)}
    by using the solutions to the logistic-type ordinary differential equation \eqref{eq:logistic_growth_ode}.
    Then, the results will follow from the convergence properties stated in Lemma~\ref{lem:logistic_growth}.

    Fix $\delta > 0$ and choose $\tau \in \left(0,\delta\right)$.
    Let
    \begin{align*}
        \bar{W}_0             &= \max_{x\in\clOmega} w_0(x),
        \\
        \Wunderbar_0^{(r)} &= \min_{x\in\clOmega} w^{(r,b,d,0)}(x,\tau,w_0).
    \end{align*}
    We have used the superscript ${}^{(r)}$ to highlight the dependence on the growth rate $r$.
    Since we supposed that $w_0$ is not identically equal to zero,
    we have that $\bar{W}_0 > 0$.
    Moreover, by \eqref{eq:solution_is_strictly_positive}
    we also get $\Wunderbar_0^{(r)} > 0$ for every $r\in\mathbb{R}$ and every choice of $\tau > 0$.
    We remark that this is not necessarily the case when $\tau=0$.

    Then, by the comparison principle,
    for every $(x,t) \in \clOmega \times \left[\tau,\infty\right)$
    we have that
    \begin{equation}
        W^{(r,b)}(t-\tau,\Wunderbar_0^{(r)}) \le w^{(r,b,d,0)}(x,t,w_0) \le W^{(r,b)}(t,\bar{W}_0),
        \label{eq:bounds_for_fisher-kpp}
    \end{equation}
    where $W^{(r,b)}(t,W_0)$ is defined as in Lemma~\ref{lem:logistic_growth}.
    Then, points (i) and (ii) immediately follow from
    \eqref{eq:logistic_growth_ode:limit}
    and
    \eqref{eq:logistic_growth_ode:limit_r_negative},
    respectively.
    Assertion \eqref{eq:limits_for_fisherKPP:r_to_minus_infty}
    is an immediate consequence of \eqref{eq:bounds_for_fisher-kpp}
    and \eqref{eq:logistic_growth_ode:limit_for_r_to_minus_infty}.

    We will now prove \eqref{eq:limits_for_fisherKPP:r_to_infty}.
    Thanks to \eqref{eq:bounds_for_fisher-kpp},
    it is sufficient to prove it separately for the lower and upper bound.
    For the lower bound some extra care is needed
    since the initial value $\Wunderbar_0^{(r)}$ depends on $r$.
    However, this difficulty can be easily overcome by observing that
    $\Wunderbar_0^{(r)}$ is increasing in $r$ by Lemma~\ref{lem:fisher-kpp_is_increasing}.
    In particular, Lemma~\ref{lem:fisher-kpp_is_increasing} implies that
    \begin{equation*}
        w^{(1,b,d,0)}(x,\tau,w_0) \le w^{(r,b,d,0)}(x,\tau,w_0),
        \quad
        \text{for all } x\in\clOmega,
        \text{ for all }r > 1,
    \end{equation*}
    so that $\Wunderbar_0^{(1)} \le \Wunderbar_0^{(r)}$
    for all $r >1$.

    Then, by the comparison principle applied to the initial value problem
    for the logistic-type differential equation \eqref{eq:logistic_growth_ode},
    we have that
    \begin{equation*}
        W^{(r,b)}(t-\tau,\Wunderbar_0^{(1)}) \le W^{(r,b)}(t-\tau,\Wunderbar_0^{(r)}),
        \quad
        \text{for all } t \ge 0,
        \text{ for all } r \ge 1.
    \end{equation*}
    This means that for $r$ large enough
    the inequalities \eqref{eq:bounds_for_fisher-kpp}
    imply that
    \begin{equation*}
        W^{(r,b)}(t-\tau,\Wunderbar_0^{(1)}) \le w^{(r,b,d,0)}(x,t,w_0) \le W^{(r,b)}(t,\bar{W}_0).
    \end{equation*}
    Since $\delta > \tau$ and the initial value $\Wunderbar_0^{(1)}$ does not depend on $r$,
    the left hand side divided by $r/b$ converges to $1$
    uniformly on $\clOmega \times \left[\delta,\infty\right)$
    by \eqref{eq:logistic_growth_ode:limit_for_r_to_infty:scaled}.
    The same can be said for the right hand side and so the proof of \eqref{eq:limits_for_fisherKPP:r_to_infty} is complete.
\ifspringer\qed\fi
\end{proof}

We will now subtract a time and space dependent term $K(x,t) \ge 0$
from the growth rate of the Fisher-KPP equation.
Such term may for example represent the competitive interaction with other species,
which results in a lowered growth rate.
The equation which we obtain will be a general form of the equations in \eqref{eq:cds:pde}
and will be useful in studying the full system in the next sections.
Due to the non-constant effect of the term $K$,
the solution will not always converge as $t\to\infty$ as is the case for problem~\hyperref[eq:fkpp]{(F-KPP)},
only doing so when the growth rate is small enough.
In the other cases, we will only be able to show
that in the long run the solution
stays between two time-independent bounds.

We will denote by $w^{(r,b,d,K)}(x,t,w_0)$ the classical solution
(if it exists)
of the initial value problem {(F-KPP-K)\label{eq:fkpp_with_K}}
for the following reaction-diffusion equation
\begin{equation}
    w_t = d \, \upDelta w + ( r - K - b \, w ) \, w
    \quad
    \text{in } \upOmega \times \left( 0, \infty \right),
    \label{eq:fisherKPP_with_K}
\end{equation}
with zero-flux boundary conditions on $\partial\upOmega$
and non-negative initial conditions $w_0 \in \mathcal{C}(\clOmega)$.
In \eqref{eq:fisherKPP_with_K}
we suppose that $d \ge 0$, $r > 0$ and $b > 0$
and that the bounded function $K=K(x,t)$ is non-negative.
In the case $K \equiv 0$,
equation \eqref{eq:fisherKPP_with_K}
reduces to the standard Fisher-KPP equation \eqref{eq:convergence_to_cc:fisherKPP}.

\begin{remark}
    For the sake of brevity,
    we do not discuss under which conditions on $K$
    the classical solution of problem~\hyperref[eq:fkpp_with_K]{(F-KPP-K)} exists.
    This will not be a problem,
    since we will always apply the results of this section
    to single components of the vector solution $\bm{u}$ of problem~\hyperref[eq:cds]{(P)},
    whose existence is guaranteed by Theorem~\ref{thm:existence}.

    However, in order to correctly introduce the symbol $w^{(r,b,d,K)}(x,t,w_0)$,
    we need at least to know that the classical solution of problem~\hyperref[eq:fkpp_with_K]{(F-KPP-K)}
    is unique if it exists.
    This can be easily proven in a completely standard way.
    Let $w_1$ and $w_2$ be two classical solutions
    of problem~\hyperref[eq:fkpp_with_K]{(F-KPP-K)}
    for the same initial data $w_0$.
    We will show that they must coincide, thus proving the uniqueness of the classical solution.
    By substituting $w = w_1$ and $w = w_2$
    in \eqref{eq:fisherKPP_with_K}
    and subtracting the resulting equations,
    we obtain
    \begin{equation*}
        \varphi_t = d \, \upDelta \varphi + \left(r-K\right)\varphi - b \left(w_1 + w_2\right) \varphi,
    \end{equation*}
    where $\varphi = w_1 - w_2$.
    By multiplication of each side by $\varphi$ we get that
    \begin{equation*}
        \varphi \, \varphi_t
        =
        d \, \varphi \, \upDelta \varphi + \left(r-K\right)\varphi^2 - b \left(w_1 + w_2\right) \varphi^2
        \le
        d \, \varphi \, \upDelta \varphi + r \, \varphi^2,
    \end{equation*}
    where the inequality holds since $w_1, w_2 \ge 0$ by the comparison principle
    and $K \ge 0$ by hypothesis.
    By integration on $\upOmega$ and Green's first identity, we have
    \begin{align*}
        \frac{1}{2} \frac{\mathrm{d}}{\mathrm{d}t} {\left\lVert \varphi(t) \right\rVert}^{2}_{L^2(\upOmega)}
        &=
        \int_{\mathrlap{\upOmega}} \,\, \varphi \, \varphi_t
        \le
        d {\int_{\mathrlap{\upOmega}} \,\, \varphi \, \upDelta \varphi} + r {\int_{\mathrlap{\upOmega}} \,\, \varphi^2}
        \\
        &=
        -d {\int_{\mathrlap{\upOmega}} \, {\left( \nabla \varphi \right)}^2} + r {\int_{\mathrlap{\upOmega}} \,\, \varphi^2}
        \le r {\left\lVert \varphi(t) \right\rVert}^{2}_{L^2(\upOmega)}.
    \end{align*}
    Then, the application of the Gronwall lemma yields that
    \begin{equation*}
        {\left\lVert \varphi(t) \right\rVert}_{L^2(\upOmega)}
        \le
        e^{\frac{1}{2}rt}
        {\left\lVert \varphi(0) \right\rVert}_{L^2(\upOmega)}.
    \end{equation*}
    Since the initial value is the same for both $w_1$ and $w_2$,
    we have that $\varphi(0) = w_1(0) - w_2(0) \equiv 0$
    from which we conclude that $\varphi$ is identically zero at all times,
    which means that $w_1$ and $w_2$ coincide.
\end{remark}

\begin{theorem}
    \label{thm:limits_for_fisher-kpp_with_K}
    Let $w^{(r,b,d,K)}(x,t,w_0)$ be the classical solution (if it exists)
    of problem~\hyperref[eq:fkpp_with_K]{(F-KPP-K)}
    with $w_0$ non-negative and not identically equal to zero.
    Let $\delta > 0$ arbitrary
    and let $\Kunderbar$ and $\bar{K}$ be such that
    $0 \le \Kunderbar \le K(x,t) \le \bar{K}$
    for every $(x,t) \in \clOmega \times \left[\delta,\infty\right)$.
    Then, for every $(x,t) \in \clOmega \times \left[\delta,\infty\right)$
    we have that
    \begin{equation}
        w^{(r-\bar{K},b,d,0)}(x,t,w_0)
        \le
        w^{(r,b,d,K)}(x,t,w_0)
        \le
        w^{(r-\Kunderbar,b,d,0)}(x,t,w_0),\
        \label{eq:limits_for_fisher-kpp_with_K:bounds}
    \end{equation}
    Moreover, we have that
    \begin{enumerate}
        \item if $r \le \Kunderbar$,
              then $w^{(r,b,d,K)}(x,t,w_0)$ converges to $0$
              uniformly on $\clOmega$
              as $t \to \infty$;
        \item if $\Kunderbar < r \le \bar{K}$,
              then
              \begin{align*}
                  0
                  &\le
                  \adjustlimits
                  \liminf_{t \to \infty}
                      \min_{x\in\clOmega}
                         w^{(r,b,d,K)}(x,t,w_0)
                  \\
                  &\le
                  \adjustlimits
                  \limsup_{t \to \infty}
                      \max_{x\in\clOmega}
                         w^{(r,b,d,K)}(x,t,w_0)
                  \le
                  \frac{r-\Kunderbar}{b};
              \end{align*}
        \item if $r > \bar{K}$,
              then
              \begin{align*}
                  \frac{r-\bar{K}}{b}
                  &\le
                  \adjustlimits
                  \liminf_{t \to \infty}
                      \min_{x\in\clOmega}
                         w^{(r,b,d,K)}(x,t,w_0)
                  \\
                  &\le
                  \adjustlimits
                  \limsup_{t \to \infty}
                      \max_{x\in\clOmega}
                         w^{(r,b,d,K)}(x,t,w_0)
                  \le
                  \frac{r-\Kunderbar}{b};
              \end{align*}
        \item if the upper bound $\bar{K}$ can be chosen independently of $r$ for $r$ sufficiently large
              and if the solution $w^{(r,b,d,K)}(x,t,w_0)$ exists for all $r$ sufficiently large,
              then it holds that
              \begin{equation*}
                  \lim_{r\to\infty} {\left \lVert \frac{w^{(r,b,d,K)}(x,t,w_0)}{r/b} - 1 \right \rVert}_{\mathcal{C}(\clOmega\times\left[\delta,\infty\right))} = 0.
              \end{equation*}
    \end{enumerate}
\end{theorem}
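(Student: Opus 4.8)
The plan is to reduce everything to the two-sided bound~\eqref{eq:limits_for_fisher-kpp_with_K:bounds}, after which parts (i)--(iv) follow from the Fisher--KPP asymptotics already established in Theorem~\ref{thm:limits_for_fisherKPP}. So the central step is the comparison argument producing~\eqref{eq:limits_for_fisher-kpp_with_K:bounds}. Write $w = w^{(r,b,d,K)}(\cdot,\cdot,w_0)$. By Theorem~\ref{thm:existence} applied with $m=1$ we have $w \ge 0$, so for every $(x,t) \in \clOmega \times \left[\delta,\infty\right)$ the hypothesis $\Kunderbar \le K(x,t) \le \bar{K}$ yields the pointwise chain
\begin{equation*}
    (r - \bar{K} - b\,w)\,w \le (r - K - b\,w)\,w \le (r - \Kunderbar - b\,w)\,w .
\end{equation*}
Hence, on the cylinder $\clOmega \times \left[\delta,\infty\right)$, the function $w$ is a supersolution of the Fisher--KPP equation with growth rate $r - \bar{K}$ and a subsolution of the Fisher--KPP equation with growth rate $r - \Kunderbar$ (both with the same $d$ and $b$): indeed the two defects equal $(\bar{K}-K)\,w \ge 0$ and $(\Kunderbar-K)\,w \le 0$, respectively.

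Next I would anchor the comparison at $t=\delta$ rather than at $t=0$; this is forced, since the bounds on $K$ are only assumed on $\left[\delta,\infty\right)$, and it is the one genuinely delicate point. Taking the two bounding Fisher--KPP solutions to agree with $w$ at time $\delta$ and applying the comparison principle on $\left[\delta,\infty\right)$ gives~\eqref{eq:limits_for_fisher-kpp_with_K:bounds}. I would stress that the precise value carried at $t=\delta$ is immaterial for conclusions (i)--(iii): these need only the large-time limits of the bounds, and by points (i) and (ii) of Theorem~\ref{thm:limits_for_fisherKPP} any Fisher--KPP solution with positive, not-identically-zero data converges uniformly on $\clOmega$ to $r'/b$ when its growth rate $r' > 0$ and to $0$ when $r' \le 0$.

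With~\eqref{eq:limits_for_fisher-kpp_with_K:bounds} in hand, parts (i)--(iii) follow by reading off these limits for the upper bound (growth rate $r - \Kunderbar$) and the lower bound (growth rate $r - \bar{K}$). When $r \le \Kunderbar$ the upper bound tends to $0$ and, with $w \ge 0$, forces convergence to $0$, giving (i). When $\Kunderbar < r \le \bar{K}$ the upper bound tends to $(r-\Kunderbar)/b$ while the lower bound tends to $0$, giving the estimate in (ii); when $r > \bar{K}$ the lower bound tends to $(r-\bar{K})/b$, giving (iii).

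Finally, for (iv) I would squeeze the scaled solution between an \emph{a priori} upper bound and the lower Fisher--KPP bound. Since $K \ge 0$, the function $w$ is a subsolution of the pure Fisher--KPP equation with rate $r$, so by~\eqref{eq:boundedness} we have $w \le \max\{r/b, {\lVert w_0 \rVert}_{\mathcal{C}(\clOmega)}\}$, whence $w/(r/b) \le 1$ once $r \ge b\,{\lVert w_0 \rVert}_{\mathcal{C}(\clOmega)}$. For the matching lower estimate I would divide the lower bound $w^{(r-\bar{K},b,d,0)}$ by $r/b$ and let $r \to \infty$ via~\eqref{eq:limits_for_fisherKPP:r_to_infty} (with growth rate $r-\bar{K}\to\infty$, as $\bar{K}$ is now $r$-independent). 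The obstacle is that the data carried at $t=\delta$ depends on $r$; I expect this to be the hardest technical point and would resolve it exactly as in the proof of Theorem~\ref{thm:limits_for_fisherKPP}: $w^{(r,b,d,K)}$ is non-decreasing in $r$ by a comparison argument identical to Lemma~\ref{lem:fisher-kpp_is_increasing} (raising $r$ only raises the reaction term), so $w(\cdot,\delta)$ is bounded below by the fixed, $r$-independent, strictly positive function $w^{(r_0,b,d,K)}(\cdot,\delta)$ for all $r \ge r_0$. The scaled lower bound then converges to $1$ uniformly on $\clOmega \times \left[\delta,\infty\right)$, and squeezing against the upper estimate completes (iv).
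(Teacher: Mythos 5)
Your overall strategy is the same as the paper's: squeeze $w^{(r,b,d,K)}$ between two pure Fisher--KPP solutions via the comparison principle and then read off the asymptotics from Theorem~\ref{thm:limits_for_fisherKPP}. On the anchoring question you are in fact \emph{more} careful than the paper: the paper's own proof compares against $w^{(r-\bar{K},b,d,0)}(x,t,w_0)$ and $w^{(r-\Kunderbar,b,d,0)}(x,t,w_0)$ issued from $w_0$ at $t=0$, which tacitly uses the bounds on $K$ down to $t=0$, whereas the hypothesis only gives them on $\clOmega\times[\delta,\infty)$. Your re-anchoring at $t=\delta$ is the correct repair; note, however, that what it yields is a \emph{variant} of \eqref{eq:limits_for_fisher-kpp_with_K:bounds} (bounding solutions restarted at $\delta$ from $w(\cdot,\delta)$), not the displayed inequality itself, which as written can fail under the stated hypothesis (take $K\equiv 0$ on $[0,\delta)$ and $K\equiv M$ large on $[\delta,\infty)$: at $t=\delta$ the right-hand side of \eqref{eq:limits_for_fisher-kpp_with_K:bounds} is of order $e^{(r-M)\delta}$ while the solution is not small). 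As you observe, the datum carried at $t=\delta$ is irrelevant for (i)--(iii), since the large-time limits in Theorem~\ref{thm:limits_for_fisherKPP} do not depend on the nonnegative, nontrivial initial value; so your proof of \eqref{eq:limits_for_fisher-kpp_with_K:bounds}-as-variant and of (i)--(iii) is sound.

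The genuine gap is in your part (iv). Your fix for the $r$-dependence of the datum at $t=\delta$ rests on the claim that $w^{(r,b,d,K)}$ is non-decreasing in $r$, and that comparison argument requires the \emph{same} $K$ in both equations. But (iv) is formulated---and applied in Theorem~\ref{thm:rm_to_infty}---precisely in the situation where $K$ itself varies with $r$ (there $K_m=\sum_{i<m}b_{mi}u_i$ and the whole solution depends on $r_m$), with only the upper bound $\bar{K}$ required to be $r$-independent; if $K$ were fixed, the hypothesis ``$\bar{K}$ can be chosen independently of $r$'' would be vacuous, since any fixed bounded $K$ has an $r$-independent supremum. For $r$-dependent $K=K^{(r)}$ the functions $K^{(r)}$ and $K^{(r_0)}$ need not be ordered, so your monotonicity step collapses, and no argument anchored only at $t=\delta$ can succeed: if, say, $K^{(r)}\equiv 2r$ on $[0,\delta)$ and $K^{(r)}\equiv 0$ on $[\delta,\infty)$ (so $\bar{K}=0$ is $r$-independent), then $w(\cdot,\delta)\le \lVert w_0\rVert e^{-r\delta}$, so $w(\cdot,\delta)/(r/b)\to 0$ and the conclusion of (iv) already fails at $t=\delta$. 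In other words, (iv) is simply not true unless the $r$-independent bound $K^{(r)}\le\bar{K}$ holds from $t=0$ onward---which is exactly what the application provides ($K_m\le\sum_i b_{mi}M_i$ globally, by \eqref{eq:boundedness}) and what the paper implicitly uses: anchoring the lower comparison at $t=0$, the datum $w_0$ is $r$-independent and \eqref{eq:limits_for_fisherKPP:r_to_infty} applies directly, with your \eqref{eq:boundedness}-based upper estimate (or the bound $(r-\Kunderbar)/b\le r/b$) completing the squeeze. A secondary, smaller defect of your route: even for $r$-independent $K$, the re-anchored lower bound only gives uniform convergence on $\clOmega\times[\delta+\delta'',\infty)$ after time-translating \eqref{eq:limits_for_fisherKPP:r_to_infty}; closing the strip $[\delta,\delta+\delta'']$ again requires information from before $t=\delta$. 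So to complete (iv) you should add the hypothesis that $\bar{K}$ bounds $K$ on all of $\clOmega\times[0,\infty)$ and then run the comparison from $t=0$, at which point your monotonicity device becomes unnecessary.
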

\begin{proof}
    We start by proving \eqref{eq:limits_for_fisher-kpp_with_K:bounds}.
    First, observe that $w^{(r,b,d,K)}(x,t,w_0) \ge 0$
    by the comparison principle,
    since $w_0 \ge 0$
    and the constant function $0$
    is a solution of \eqref{eq:fisherKPP_with_K}.
    As a consequence, by using the fact that $r-\bar{K} \le r \le r-\Kunderbar$ by hypothesis,
    we obtain that
    \begin{align*}
        d \, \upDelta w + \left( r -\bar{K} - b \, w \right) w
        &\le
        w_t
        = d \, \upDelta w + \left( r - K - b \, w \right) w,
        \\
        d \, \upDelta w + \left( r - K - b \, w \right) w
        &=
        w_t
        \le
        d \, \upDelta w + \left( r -\Kunderbar - b \, w \right) w,
    \end{align*}
    where we have let $w=w^{(r,b,d,K)}(x,t,w_0)$ for the sake of brevity.
    This means that $w^{(r,b,d,K)}(x,t,w_0)$
    is respectively a upper solution of problem~\hyperref[eq:fkpp]{(F-KPP)} with growth rate equal to $r-\bar{K}$
    and a lower solution of problem~\hyperref[eq:fkpp]{(F-KPP)} with growth rate equal to $r-\Kunderbar$,
    so that we have \eqref{eq:limits_for_fisher-kpp_with_K:bounds} by the comparison principle.

    From \eqref{eq:limits_for_fisher-kpp_with_K:bounds},
    by taking the minimum over $\clOmega$ for the lower bound
    and the maximum over $\clOmega$ for the upper bound,
    we get
    \begin{align*}
        \min_{x\in\clOmega}
            w^{(r-\bar{K},b,d,0)}(x,t,w_0)
        &\le
        \min_{x\in\clOmega}
            w^{(r,b,d,K)}(x,t,w_0)
        \\
        &\le
        \max_{x\in\clOmega}
            w^{(r,b,d,K)}(x,t,w_0)
        \\
        &\le
        \max_{x\in\clOmega}
            w^{(r-\Kunderbar,b,d,0)}(x,t,w_0).
    \end{align*}
    We can take the liminf on both sides of the leftmost inequality
    and the limsup on both sides of the rightmost inequality,
    obtaining that
    \begin{align*}
        \adjustlimits
        \lim_{t\to\infty}
            \min_{x\in\clOmega}
                w^{(r-\bar{K},b,d,0)}(x,t,w_0)
        &\le
        \adjustlimits
        \liminf_{t \to \infty}
            \min_{x\in\clOmega}
               w^{(r,b,d,K)}(x,t,w_0)
        \nonumber
        \\
        &\le
        \adjustlimits
        \limsup_{t \to \infty}
            \max_{x\in\clOmega}
               w^{(r,b,d,K)}(x,t,w_0)
        \\
        &\le
        \adjustlimits
        \lim_{t\to\infty}
            \max_{x\in\clOmega}
                w^{(r-\Kunderbar,b,d,0)}(x,t,w_0).
    \end{align*}
    We remark that in the case of the lower and upper bounds,
    which are solutions of the Fisher-KPP equation \eqref{eq:convergence_to_cc:fisherKPP},
    the liminf can be reduced to a regular limit thanks to Theorem~\ref{thm:limits_for_fisherKPP}.
    In the case of the solutions to problem~\hyperref[eq:fkpp_with_K]{(F-KPP-K)},
    the limit may in general not exist.
    In particular, Theorem~\ref{thm:limits_for_fisherKPP} allows us to write explicitly the limit values,
    yielding
    \begin{align}
        \max\left\{ \frac{r-\bar{K}}{b}, 0 \right\}
        &\le
        \adjustlimits
        \liminf_{t \to \infty}
            \min_{x\in\clOmega}
               w^{(r,b,d,K)}(x,t,w_0)
        \nonumber
        \\
        &\le
        \adjustlimits
        \limsup_{t \to \infty}
            \max_{x\in\clOmega}
               w^{(r,b,d,K)}(x,t,w_0)
        \le
        \max\left\{ \frac{r-\Kunderbar}{b}, 0 \right\}.
        \label{eq:limits_for_fisher-kpp_with_K:limits}
    \end{align}

    Then, points (i), (ii) and (iii) are just specialization of \eqref{eq:limits_for_fisher-kpp_with_K:limits}
    for different values of $r$.
    Finally, point (iv) follows from point (iii),
    since we have that
    \begin{equation*}
        \lim_{r\to\infty} \frac{(r-\bar{K})/b}{r/b} = 1
        \quad
        \text{and}
        \quad
        \frac{r-\Kunderbar}{b} \le \frac{r}{b},
    \end{equation*}
    where the limit holds thanks to $\bar{K}$ being independent of $r$.
\ifspringer\qed\fi
\end{proof}

\section{Behaviour when $r_m$ is large}
\label{sec:rm_large}
In this section we study the behaviour of the solutions of \hyperref[eq:cds]{(P)},
the $m$-species competition-diffusion system,
when one of the species
has a very large intrinsic growth rate.
Without any loss of generality we can suppose this species to be the $m$-th one,
having growth rate $r_m$.
From ecological considerations we expect all other species to become extinct,
while the density of the $m$-th one, left alone, reaches its carrying capacity.
In Theorem~\ref{thm:rm_to_infty}
we show that
all species but the $m$-th one go to zero uniformly in space and time
if we let $r_m$ tend to infinity.
Then, in Theorem~\ref{thm:rm_large}
we show that extinction of the first $m-1$ species
also occurs as $t$ tends to infinity
if $r_m$ is large enough (but still finite).

In this and the following sections we will denote by
$K_i$ the total effect on the growth rate of the $i$-th species
by the inter-specific competition with the other species.
Namely,
\begin{equation*}
    K_i(x,t) = \sum_{\substack{j=1\\j \ne i}}^{m} b_{ij} u_j(x,t),
\end{equation*}
so that \eqref{eq:cds:pde} becomes
\begin{equation}
\partial_t u_i
=
d_i \, \upDelta u_i + ( r_i - K_i - b_{ii} u_i ) \, u_i
\quad \text{in } \upOmega\times\left(0,\infty\right),
\quad \text{for all } i=1,\dots,m.
\label{eq:cds:pde:K_form}
\end{equation}
This form allows us to see more clearly that we can apply Theorem~\ref{thm:limits_for_fisher-kpp_with_K}
to each equation in \eqref{eq:cds:pde:K_form}.
In particular, using the notation introduced in the previous section, we have
\begin{equation}
    u_i(x,t) = w^{(r_i,b_{ii},d_i,K_i)}(x,t,u_{0,i}),
    \qquad
    \text{for all } i=1,\dots,m.
    \label{eq:cds:pde:K_form:single}
\end{equation}

Now we show that in the limit for $r_m \to \infty$ the $m$-th species will converge uniformly to its carrying capacity,
while the others will converge to $0$.
We remark that in order to prove the extinction of the $i$-th species, for $i=1,\dots,m-1$,
it is necessary to suppose that $b_{im} > 0$.
If this were not the case,
the $i$-th species would not be affected directly by the $m$-th one.
It would still be affected indirectly through the other species,
but, since no species can be benefited by a stronger invader
and since all inter-species interactions are competitive,
the $i$-th species is never penalized by higher densities of the $m$-th one.

\begin{theorem}
    \label{thm:rm_to_infty}
    Let $\bm{u}$
    be the classical solution of \hyperref[eq:cds]{(P)}
    with non-negative initial data $\bm{u}_0 \in \mathcal{C}(\clOmega,\mathbb{R}^m)$
    such $u_{0,m}$ is not identically zero.
    Suppose that
    $b_{im} > 0$
    for every $i=1,\dots,m-1$.
    Then, for every $\delta > 0$
    we have that
    \begin{gather}
        \lim_{r_m\to\infty}
            {\left \lVert u_i \right \rVert}_{\mathcal{C}(\clOmega\times\left[\delta,\infty\right))}
        = 0
        \qquad
        \text{for all }i=1,\dots,m-1,
        \label{eq:rm_to_infty:thesis_for_v}
        \\
        \lim_{r_m\to\infty}
            {\left \lVert \frac{u_m}{r_m/b_{mm}} - 1 \right \rVert}_{\mathcal{C}(\clOmega\times\left[\delta,\infty\right))}
        = 0.
        \label{eq:rm_to_infty:thesis_for_w}
    \end{gather}
\end{theorem}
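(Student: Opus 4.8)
The plan is to reduce the whole statement to the scalar theory of Section~\ref{sec:scalar} through the representation \eqref{eq:cds:pde:K_form:single}, viewing each component as $u_i = w^{(r_i,b_{ii},d_i,K_i)}(x,t,u_{0,i})$ and controlling the coupling term $K_i$ by means of the a~priori bounds \eqref{eq:boundedness}. I would establish \eqref{eq:rm_to_infty:thesis_for_w} first, since the extinction of the remaining species is driven precisely by the largeness of $u_m$. For the $m$-th equation one has $K_m = \sum_{j\ne m} b_{mj} u_j \le \sum_{j\ne m} b_{mj} M_j \eqqcolon \bar{K}$, and this bound is independent of $r_m$ because each $M_j$ with $j \ne m$ depends only on $r_j$ and $u_{0,j}$. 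As $u_{0,m}$ is not identically zero and the solution exists for every $r_m$ by Theorem~\ref{thm:existence}, point~(iv) of Theorem~\ref{thm:limits_for_fisher-kpp_with_K} applies directly and gives \eqref{eq:rm_to_infty:thesis_for_w}.

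For the first $m-1$ species I would instead exploit a lower bound on $K_i$ that diverges with $r_m$. Fix $i \le m-1$; since $b_{im} > 0$ and all densities are non-negative, $K_i \ge b_{im} u_m$. Choose $\delta'$ with $0 < \delta' < \delta$. By \eqref{eq:rm_to_infty:thesis_for_w} on $\clOmega \times [\delta',\infty)$, for a fixed $\varepsilon \in (0,1)$ there is $R$ such that $u_m \ge (1-\varepsilon)\,r_m/b_{mm}$ on $\clOmega \times [\delta',\infty)$ whenever $r_m > R$, hence $K_i \ge \Kunderbar \coloneqq b_{im}(1-\varepsilon)\,r_m/b_{mm}$ there, with $\Kunderbar \to \infty$ as $r_m \to \infty$. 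Restarting the clock at $\delta'$, the comparison principle underlying \eqref{eq:limits_for_fisher-kpp_with_K:bounds} yields $u_i(x,t) \le w^{(r_i-\Kunderbar,b_{ii},d_i,0)}(x,t-\delta',u_i(\cdot,\delta'))$ for $t \ge \delta'$; since $u_i(\cdot,\delta') \le M_i$, I may replace the initial datum by the constant $M_i$, which does not depend on $r_m$. The growth rate $r_i - \Kunderbar$ tends to $-\infty$, so \eqref{eq:limits_for_fisherKPP:r_to_minus_infty}, applied with the time shift $\delta-\delta' > 0$, forces this upper bound to $0$ uniformly on $\clOmega \times [\delta,\infty)$, which is \eqref{eq:rm_to_infty:thesis_for_v}.

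The delicate point is not any single estimate but the interplay of the three $r_m$-dependencies in the second part. The convergence \eqref{eq:rm_to_infty:thesis_for_w} is only available from time $\delta'$ onward, the induced lower bound $\Kunderbar$ on $K_i$ grows with $r_m$, and the natural initial datum $u_i(\cdot,\delta')$ for the comparison problem itself depends on $r_m$ through the coupling. These must be disentangled so that the concluding application of \eqref{eq:limits_for_fisherKPP:r_to_minus_infty}, which is stated for a fixed initial datum, remains legitimate; dominating $u_i(\cdot,\delta')$ by the $r_m$-independent constant $M_i$ and inserting the buffer $\delta-\delta' > 0$ between the two reference times is exactly what makes the singular limit uniform in both space and time.
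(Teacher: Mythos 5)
Your proposal is correct and follows essentially the same route as the paper's own proof: first bound $K_m$ by the $r_m$-independent constant $\bar{K}=\sum_{j\ne m}b_{mj}M_j$ and invoke Theorem~\ref{thm:limits_for_fisher-kpp_with_K}~(iv) to get \eqref{eq:rm_to_infty:thesis_for_w}, then use $K_i \ge b_{im}u_m$ to make the effective growth rate $r_i-\Kunderbar$ diverge to $-\infty$ and conclude \eqref{eq:rm_to_infty:thesis_for_v} from \eqref{eq:limits_for_fisherKPP:r_to_minus_infty}. The only difference is a refinement in your favour: the paper applies \eqref{eq:limits_for_fisher-kpp_with_K:bounds} directly with $\Kunderbar_i = b_{im}\inf_{\clOmega\times[\delta,\infty)}u_m$ and the original datum $u_{0,i}$ (glossing over the fact that this lower bound on $K_i$ is only available from time $\delta$ onward), whereas your restart at $\delta'<\delta$ with the datum dominated by the $r_m$-independent constant $M_i$ handles that point explicitly.
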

\begin{proof}
    Fix $\delta > 0$.
    First, we prove \eqref{eq:rm_to_infty:thesis_for_w},
    namely, that $u_m$ will converge to its carrying capacity uniformly as $r_m \to \infty$.
    By \eqref{eq:cds:pde:K_form:single},
    we have that
    $u_m(x,t) = w^{(r_m,b_{mm},d_m,K_m)}(x,t,u_{0,m})$.
    The value $\bar{K} = \sum_{i=1}^{m-1}b_{mi}M_i$,
    where $M_i$ is defined as in Theorem~\ref{thm:existence},
    is an upper bound for $K_m$ independent of $r_m$.
    Then, by applying Theorem~\ref{thm:limits_for_fisher-kpp_with_K}~(iv)
    we obtain \eqref{eq:rm_to_infty:thesis_for_w}.

    We now prove \eqref{eq:rm_to_infty:thesis_for_v},
    i.e., that all other components will tend to zero as $r_m$ tends to infinity.
    Fix $i\in\{1,\dots,m-1\}$.
    By \eqref{eq:cds:pde:K_form:single},
    we have that
    $u_i(x,t) = w^{(r_i,b_{ii},d_i,K_i)}(x,t,u_{0,i})$.
    For every $(x,t) \in \clOmega \times \left[ \delta, \infty \right)$,
    the term $K_i$ satisfies
    \begin{equation*}
        K_i(x,t)
        =   \sum_{\substack{j=1\\j \ne i}}^{m} b_{ij} u_j(x,t)
        \ge b_{im} u_m(x,t)
        \ge b_{im} \inf_{(x,t) \in \clOmega \times \left[ \delta, \infty \right)} u_m(x,t)
        \eqqcolon \Kunderbar_i
        \ge 0.
    \end{equation*}
    The lower bound $\Kunderbar_i$ depends on the value of the parameter $r_m$
    and by \eqref{eq:rm_to_infty:thesis_for_w} and the strict positivity of $b_{im}$
    it holds that $\lim_{r_m\to\infty} \Kunderbar_i = \infty$.

    By the upper bound \eqref{eq:limits_for_fisher-kpp_with_K:bounds}
    in Theorem~\ref{thm:limits_for_fisher-kpp_with_K},
    for every $(x,t) \in \clOmega \times \left[ \delta, \infty \right)$
    we have that
    \begin{equation*}
        u_i(x,t) =   w^{(r_i,b_{ii},d_i,K_i)}(x,t,u_{0,i})
                 \le w^{(r_i-\Kunderbar_i,b_{ii},d_i,0)}(x,t,u_{0,i}).
    \end{equation*}
    Since $\lim_{r_m\to\infty} \left(r_i-\Kunderbar_i\right) = -\infty$,
    we can conclude by \eqref{eq:limits_for_fisherKPP:r_to_minus_infty}
    that the right hand side converges uniformly to $0$
    on $\clOmega \times \left[ \delta, \infty \right)$,
    and thus so does $u_i$,
    concluding the proof of \eqref{eq:rm_to_infty:thesis_for_v}.
\ifspringer\qed\fi
\end{proof}

\begin{lemma}
    \label{lem:initial_data_can_be_taken_small}
    Let $\bm{u}$
    be the classical solution of \hyperref[eq:cds]{(P)}
    with non-negative initial data $\bm{u}_0 \in \mathcal{C}(\clOmega,\mathbb{R}^m)$.
    Then, for every $\varepsilon > 0$
    there exists a time $t_1 = t_1(\varepsilon) \ge 0$
    such that
    \begin{equation*}
        \sup_{(x,t)\in\clOmega\times\left[t_1,\infty\right)} u_i(x,t)
        <
        \frac{r_i}{b_{ii}} + \varepsilon,
    \end{equation*}
    for all $i=1,\dots,m$.
\end{lemma}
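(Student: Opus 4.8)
The plan is to bound each component $u_i$ from above by the solution of the Fisher-KPP equation it would satisfy in the absence of the competing species, and then to invoke the convergence to the carrying capacity established in Theorem~\ref{thm:limits_for_fisherKPP}. Fix any $\delta > 0$. Writing the system in the form \eqref{eq:cds:pde:K_form:single}, each component is $u_i = w^{(r_i,b_{ii},d_i,K_i)}(x,t,u_{0,i})$ with $K_i \ge 0$. Applying the upper bound \eqref{eq:limits_for_fisher-kpp_with_K:bounds} of Theorem~\ref{thm:limits_for_fisher-kpp_with_K} with the choice $\Kunderbar = 0$ (legitimate since $K_i \ge 0$), I obtain
\begin{equation*}
    u_i(x,t) \le w^{(r_i,b_{ii},d_i,0)}(x,t,u_{0,i})
    \quad \text{on } \clOmega \times \left[\delta,\infty\right),
\end{equation*}
so that each species is dominated by the single-species dynamics obtained by discarding the non-negative inter-specific competition term $K_i$.

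Next I would distinguish two cases according to the initial datum. If $u_{0,i} \equiv 0$, then $u_i \equiv 0$ and the claimed bound holds trivially at all times. Otherwise $u_{0,i}$ is non-negative and not identically zero, so, since $r_i > 0$, Theorem~\ref{thm:limits_for_fisherKPP}~(i) guarantees that $w^{(r_i,b_{ii},d_i,0)}(x,t,u_{0,i})$ converges to $r_i/b_{ii}$ uniformly on $\clOmega$ as $t \to \infty$. Hence, given $\varepsilon > 0$, there exists $t_1^{(i)} = t_1^{(i)}(\varepsilon) \ge \delta$ such that
\begin{equation*}
    {\left\lVert w^{(r_i,b_{ii},d_i,0)}(\cdot,t,u_{0,i}) - \frac{r_i}{b_{ii}} \right\rVert}_{\mathcal{C}(\clOmega)} < \frac{\varepsilon}{2}
    \quad \text{for all } t \ge t_1^{(i)}.
\end{equation*}
Combining this with the comparison bound gives $u_i(x,t) < r_i/b_{ii} + \varepsilon/2$ for every $x \in \clOmega$ and every $t \ge t_1^{(i)}$; the $\varepsilon/2$ margin ensures that the strict inequality survives passage to the supremum over $\clOmega \times \left[t_1^{(i)},\infty\right)$.

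Finally I would set $t_1 = \max_{i=1,\dots,m} t_1^{(i)}$, which is finite, and conclude that the stated strict bound holds simultaneously for all $i=1,\dots,m$ on $\clOmega \times \left[t_1,\infty\right)$. I do not expect any genuine difficulty here: the lemma is essentially a repackaging of the uniform convergence in Theorem~\ref{thm:limits_for_fisherKPP}~(i), transported to the full system by the elementary comparison that throws away the non-negative competition term. The only points demanding a little attention are the separate treatment of the identically-zero components (to which Theorem~\ref{thm:limits_for_fisherKPP} does not directly apply, and for which the bound is anyway trivial) and the insertion of the $\varepsilon/2$ buffer needed to make the supremum bound strict rather than merely non-strict.
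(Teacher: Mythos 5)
Your proof is correct and takes essentially the same route as the paper: the paper simply invokes Theorem~\ref{thm:limits_for_fisher-kpp_with_K}~(ii) with the lower bound $\Kunderbar \coloneqq 0$ for $K_i$, which packages exactly your two steps (the comparison bound \eqref{eq:limits_for_fisher-kpp_with_K:bounds} discarding the competition term, followed by convergence to the carrying capacity via Theorem~\ref{thm:limits_for_fisherKPP}), and then concludes by taking $t_1 = \max_i t_{1,i}$ just as you do. Your separate handling of identically-zero components and the $\varepsilon/2$ buffer are minor refinements of details the paper leaves implicit.
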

\begin{proof}
    Fix $i\in\left\{1,\dots,m\right\}$.
    By \eqref{eq:cds:pde:K_form:single} we have
    $u_i(x,t) = w^{(r_i,b_{ii},d_i,K_i)}(x,t,u_{0,i})$.
    Then, since $\Kunderbar \coloneqq 0$ is always a lower bound for $K_i$,
    by applying Theorem~\ref{thm:limits_for_fisher-kpp_with_K}~(ii) we have that
    \begin{equation*}
        \limsup_{t\to\infty} \max_{x\in\clOmega} u_i(x,t) \le \frac{r_{i}}{b_{ii}}.
    \end{equation*}
    In particular, this means that
    there exists $t_{1,i} = t_{1,i}(\varepsilon) \ge 0$
    such that
    \begin{equation*}
        \sup_{(x,t)\in\clOmega\times\left[t_{1,i},\infty\right)} u_i(x,t) < \frac{r_{i}}{b_{ii}} + \varepsilon.
    \end{equation*}
    Then, we can conclude by taking $t_1 = \max_{i=1}^m t_{1,i}$.
\ifspringer\qed\fi
\end{proof}

\begin{theorem}
    \label{thm:rm_large}
    Under the same hypotheses of Theorem~\ref{thm:rm_to_infty},
    there exists $r_* > 0$,
    independent of the initial conditions $\bm{u}_0$,
    such that for every $r_m > r_*$ we have that
    \begin{enumerate}
        \item the function $u_i$
              converges to zero
              uniformly on $\clOmega$ as $t \to \infty$,
              for all $i=1,\dots,m-1$;
        \item the function $u_m$
              converges to $r_m / b_{mm}$
              uniformly on $\clOmega$ as $t \to \infty$.
    \end{enumerate}
\end{theorem}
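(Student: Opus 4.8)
The plan is to upgrade the singular-limit result of Theorem~\ref{thm:rm_to_infty} to a large-time statement for a \emph{fixed} finite $r_m$, the essential new requirement being that the threshold $r_*$ must not depend on $\bm{u}_0$. The strategy is a bootstrap built entirely on Theorem~\ref{thm:limits_for_fisher-kpp_with_K}: first I would bound the invader's competition term $K_m$ from above by a constant that is independent of both $r_m$ and the initial data; this yields a large-time lower bound for $u_m$ that grows linearly in $r_m$; I would then feed this lower bound into the competition terms $K_i$ of the native species to push them above the native growth rates $r_i$, forcing extinction of the natives via part~(i) of that theorem; finally, the extinction of the natives makes $K_m \to 0$, which squeezes $u_m$ onto its carrying capacity $r_m/b_{mm}$.

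Concretely, I would fix some $\varepsilon > 0$ and invoke Lemma~\ref{lem:initial_data_can_be_taken_small} to obtain a time $t_1$ with $u_i < r_i/b_{ii} + \varepsilon$ on $\clOmega \times \left[t_1,\infty\right)$ for every $i$. Recalling from \eqref{eq:cds:pde:K_form:single} that $u_m = w^{(r_m,b_{mm},d_m,K_m)}(\cdot,\cdot,u_{0,m})$, this shows that $K_m = \sum_{j \ne m} b_{mj} u_j$ is bounded above on $\clOmega \times \left[t_1,\infty\right)$ by the constant
\[
\bar{K}_m := \sum_{i=1}^{m-1} b_{mi}\left(\frac{r_i}{b_{ii}} + \varepsilon\right),
\]
which depends only on the parameters. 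The decisive point is that although the transient time $t_1$ does depend on $\bm{u}_0$, the \emph{value} $\bar{K}_m$ does not; note that this is where one must use the Lemma rather than the crude bound $M_i$ from Theorem~\ref{thm:existence}, which would reintroduce a dependence on the initial data. Provided $r_m > \bar{K}_m$, Theorem~\ref{thm:limits_for_fisher-kpp_with_K}~(iii) applied to $u_m$ (with $\delta = t_1$ and $\Kunderbar = 0$) gives
\[
\liminf_{t\to\infty} \min_{x\in\clOmega} u_m(x,t) \ge \frac{r_m - \bar{K}_m}{b_{mm}}.
\]

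Since $b_{im} > 0$ by hypothesis, I would then choose the threshold so that this lower bound strictly exceeds $r_i/b_{im}$ for each native species, for instance
\[
r_* := \max_{1 \le i \le m-1} \left( \bar{K}_m + \frac{b_{mm}\, r_i}{b_{im}} \right) > \bar{K}_m.
\]
For $r_m > r_*$ and each $i \le m-1$, the $\liminf$ estimate exceeds $r_i/b_{im}$, so there exist a small $\eta > 0$ and a time $t_2 \ge t_1$ with $u_m \ge (r_m - \bar{K}_m)/b_{mm} - \eta$ on $\clOmega \times \left[t_2,\infty\right)$ and $b_{im}\left((r_m - \bar{K}_m)/b_{mm} - \eta\right) > r_i$. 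Consequently $K_i \ge b_{im} u_m \ge \Kunderbar_i$ on $\clOmega \times \left[t_2,\infty\right)$ for a constant $\Kunderbar_i > r_i$, and Theorem~\ref{thm:limits_for_fisher-kpp_with_K}~(i) (with $\delta = t_2$) yields $u_i \to 0$ uniformly on $\clOmega$, establishing~(i).

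To conclude~(ii), I would use that once all natives vanish, $K_m = \sum_{j \ne m} b_{mj} u_j \to 0$ uniformly, so for each $\gamma > 0$ there is $t_3$ with $0 \le K_m < \gamma$ on $\clOmega \times \left[t_3,\infty\right)$. Applying Theorem~\ref{thm:limits_for_fisher-kpp_with_K}~(iii) to $u_m$ once more, now with $\bar{K} = \gamma$ and $\Kunderbar = 0$, gives
\[
\frac{r_m - \gamma}{b_{mm}} \le \liminf_{t\to\infty} \min_{x\in\clOmega} u_m \le \limsup_{t\to\infty} \max_{x\in\clOmega} u_m \le \frac{r_m}{b_{mm}},
\]
and letting $\gamma \downarrow 0$ forces $u_m \to r_m/b_{mm}$ uniformly. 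I expect the main obstacle to be exactly the uniformity asserted in the second paragraph, namely that $r_*$ can be chosen independently of $\bm{u}_0$; this is what Lemma~\ref{lem:initial_data_can_be_taken_small} secures, since it bounds each density by a universal constant even though the time needed to settle below that constant is initial-data dependent. A secondary care-point is the conversion of the $\liminf$ estimate for $u_m$ into a genuinely time-uniform lower bound for $K_i$, which is legitimate only because Theorem~\ref{thm:limits_for_fisher-kpp_with_K} requires its bounds on $K$ to hold merely for $t \ge \delta$, allowing one to shift the starting time.
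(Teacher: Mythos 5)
Your proposal is correct and follows essentially the same path as the paper's proof: Lemma~\ref{lem:initial_data_can_be_taken_small} to get an initial-data-independent bound on $K_m$, Theorem~\ref{thm:limits_for_fisher-kpp_with_K}~(iii) to bound $u_m$ from below, part~(i) to force extinction of the natives, and part~(iii) again with a vanishing upper bound on $K_m$ to squeeze $u_m$ onto $r_m/b_{mm}$. The only cosmetic difference is that you fix $\varepsilon$ first and take the threshold $r_* = \bar{K}_m + b_{mm}\max_i r_i/b_{im}$ (which depends on $\varepsilon$), whereas the paper takes the sharper $\varepsilon=0$ threshold and recovers room for a small $\varepsilon$ by continuity; both are valid since the theorem only asserts existence of some $r_*$.
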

\begin{proof}
    We define
    \begin{align}
        \bar{K}_{\delta} &\coloneqq \sum_{i=1}^{m-1} b_{mi}\left(\frac{r_{i}}{b_{ii}}+\delta\right),
        \nonumber
        \\
        \tilde{r}_{\delta} &\coloneqq \bar{K}_{\delta} + b_{mm} \max_{j=1,\dots,m-1} \frac{r_j}{b_{jm}}.
        \label{eq:rm_large:r_delta_def}
    \end{align}
    We will show that
    for all $r_m > \tilde{r}_0$
    and for all initial values $\bm{u}_0$
    the assertions (i) and (ii) hold,
    so that we can take $r_* = \tilde{r}_0$.

    Fix $r_m > r_* = \tilde{r}_0$.
    Since $\tilde{r}_{\delta}$ is continuous in $\delta$,
    we can find $\delta>0$ sufficiently small such that $r_m > \tilde{r}_{\delta}$.
    First, we show that the solution becomes bounded
    independently of the initial data.
    This is done by applying Lemma~\ref{lem:initial_data_can_be_taken_small},
    obtaining that there exists $t_1 = t_1(\delta) \ge 0$ such that
    \begin{equation}
        \sup_{(x,t)\in\clOmega\times\left[t_1,\infty\right)} u_i(x,t)
        <
        \frac{r_i}{b_{ii}} + \delta,
        \quad
        \text{for all } i=1,\dots,m.
        \label{eq:rm_large:restarted_time}
    \end{equation}

    The next step is to show that the density of the $m$-th species
    will become larger than $(\tilde{r}_{\delta}-\bar{K}_{\delta})/b_{mm}$
    for large times.
    By equation \eqref{eq:cds:pde:K_form:single},
    we have that
    $u_m(x,t) = w^{(r_m,b_{mm},d_m,K_m)}(x,t,u_{0,m})$.
    From \eqref{eq:rm_large:restarted_time}
    we get that
    the value $\bar{K}_{\delta}$ is an upper bound
    for $K_m = \sum_{i=1}^{m-1} b_{mi}u_i$.
    Moreover, since $b_{mm} > 0$ and $r_i > 0$ for every $i=1,\dots,m-1$,
    by definition of $\tilde{r}_{\delta}$
    we have $r_m > \tilde{r}_{\delta} > \bar{K}_{\delta}$.
    Then, by applying Theorem~\ref{thm:limits_for_fisher-kpp_with_K}~(iii)
    we obtain that
    \begin{equation*}
        \frac{\tilde{r}_{\delta}-\bar{K}_{\delta}}{b_{mm}}
        <
        \frac{r_m-\bar{K}_{\delta}}{b_{mm}}
        \le
        \adjustlimits
        \liminf_{t \to \infty}
            \min_{x\in\clOmega}
               u_m(x,t).
    \end{equation*}
    This means that there exists $t_2 \ge t_1$
    such that
    for all $(x,t) \in \clOmega \times \left[t_2,\infty\right)$
    it holds that
    \begin{equation}
        u_m(x,t) \ge \frac{\tilde{r}_{\delta}-\bar{K}_{\delta}}{b_{mm}}.
        \label{eq:rm_large:lower_bound_for_um}
    \end{equation}

    Now we prove (i), i.e., that the first $m-1$ species will disappear in the long run.
    This is possible since the lower bound \eqref{eq:rm_large:lower_bound_for_um}
    that we found for $u_m$ is large enough to make the total growth rates of the first $m-1$ species all negative.
    Fix $i\in\{1,\dots,m-1\}$.
    By \eqref{eq:cds:pde:K_form:single},
    we have that
    $u_i(x,t) = w^{(r_i,b_{ii},d_i,K_i)}(x,t,u_{0,i})$.
    In view of the lower bound \eqref{eq:rm_large:lower_bound_for_um}
    and of the definition \eqref{eq:rm_large:r_delta_def},
    on the set $\clOmega \times \left[ t_2, \infty \right)$
    the term $K_i$ satisfies
    \begin{equation*}
        K_i
        =   \sum_{\substack{j=1\\j \ne i}}^{m} b_{ij} u_j
        \ge b_{im} u_m
        \ge b_{im} \frac{\tilde{r}_{\delta}-\bar{K}_{\delta}}{b_{mm}}
        \ge b_{im} \max_{j=1,\dots,m-1} \frac{r_j}{b_{jm}}
        \ge r_i.
    \end{equation*}
    Then, we can choose $\Kunderbar_i \coloneqq r_i$ as a lower bound for $K_i$
    and we deduce (i)
    from Theorem~\ref{thm:limits_for_fisher-kpp_with_K}~(i).

    Finally, we prove (ii).
    Since we have shown that the first $m-1$ will go extinct for large times,
    we can now show that the last species is free to reach its carrying capacity.
    Fix $\varepsilon > 0$ arbitrarily small.
    By (i)
    there exists $t_3 \ge t_2$
    such that
    $u_i(x,t) \le \varepsilon$
    for all $(x,t) \in \clOmega \times \left[t_3,\infty\right)$
    and all $i=1,\dots,m-1$.
    Then, for every $(x,t) \in \clOmega \times \left[t_3,\infty\right)$, we have
    \begin{equation*}
        0 \le       K_m(x,t)
           =        \sum_{i=1}^{m-1} b_{mi} u_i(x,t)
          \le       \varepsilon \sum_{i=1}^{m-1} b_{mi}
          \eqqcolon \bar{K}^{(\varepsilon)}.
    \end{equation*}
    Since
    $r_m > \bar{K}^{(\varepsilon)}$
    for $\varepsilon$ small enough,
    we can apply
    Theorem~\ref{thm:limits_for_fisher-kpp_with_K}~(iii)
    to obtain
    \begin{equation*}
        \frac{r_m-\bar{K}^{(\varepsilon)}}{b_{mm}}
        \le
        \adjustlimits
        \liminf_{t \to \infty}
            \min_{x\in\clOmega}
              u_m(x,t)
        \le
        \adjustlimits
        \limsup_{t \to \infty}
            \max_{x\in\clOmega}
               u_m(x,t)
        \le
        \frac{r_m}{b_{mm}}.
    \end{equation*}
    Since $\bar{K}^{(\varepsilon)}$ tends to $0$
    as $\varepsilon \to 0$ and $\varepsilon$ is arbitrary,
    we conclude (ii).
\ifspringer\qed\fi
\end{proof}

\section{Behaviour when $r_m$ is small}
\label{sec:rm_small}
In this section
we study the behaviour of \hyperref[eq:cds]{(P)}
in the case $r_m$ is very small.
Our ecological intuition tells us that
the last species will disappear
and the behaviour of the other species will be described by a $(m-1)$-species competition-diffusion system
(which may, or may not, allow for coexistence).
As in Section~\ref{sec:rm_large},
we first examine a singular limit problem in which time is bounded
and the parameter $r_m$ tends to zero.
Then, we will consider the large-time behaviour of \hyperref[eq:cds]{(P)} in the case where $r_m$ is small but positive.
In the latter case we will be able to prove our ecologically-motivated prediction
only in the case that all diffusion coefficients are equal.
This additional assumption is due to mathematical reasons
and we expect that the same result also holds in the case of unequal diffusivities.

\begin{theorem}
    Let $u_{0,m}=u_{0,m}(r_m)\in \mathcal{C}(\clOmega)$ be such that
    \begin{equation}
        0 \le u_{0,m} \le \frac{r_m}{b_{mm}}
        \quad
        \text{in } \clOmega.
        \label{eq:varying_initial_data}
    \end{equation}
    Let $\bm{u}=(\bm{v},u_m)$
    be the classical solution of \hyperref[eq:cds]{(P)},
    with initial data $\bm{u}_0=(\bm{v}_0,u_{0,m})$
    such that $\bm{v}_0\in \mathcal{C}(\clOmega,\mathbb{R}^{m-1})$ is non-negative and independent of $r_m$.
    Then, for every $\delta,T$ such that $0 < \delta < T$,
    the solution $\bm{u}$
    converges to $(\widehat{\bm{v}},0)$
    uniformly on $\clOmega \times \left[\delta,T\right]$
    as $r_m \to 0$,
    where $\widehat{\bm{v}}$
    is the classical solution of the initial value problem
    for the $(m-1)$-species competition-diffusion system
    \begin{equation}
          \partial_t \widehat{v}_i
          =
          d_i \, \upDelta \widehat{v}_i + \left( r_i - \sum_{j=1}^{m-1} b_{ij} \widehat{v}_i \right) \, \widehat{v}_i
          \quad \text{in } \upOmega\times\left(0,\infty\right),
          \quad \text{for all } i=1,\dots,m-1,
          \label{eq:rm_to_zero:m-1_system}
    \end{equation}
    with zero-flux boundary conditions on $\partial\upOmega$
    and initial conditions
    $\widehat{\bm{v}}(x,0)=\bm{v}_{0}(x)$
    for every $x\in\upOmega$.
    Moreover, the following error estimates hold
    \begin{gather}
        {\left\lVert \bm{v}-\widehat{\bm{v}} \right\rVert}_{L^{\infty}(0, T; L^2(\upOmega,\mathbb{R}^{m-1}))} \le C \, r_m,
        \label{eq:rm_small:error_estimate:v}
        \\
        {\left\lVert u_m                       \right\rVert}_{\mathcal{C}(\clOmega \times \left[ 0, T \right])} \le \frac{r_m}{b_{mm}},
        \label{eq:rm_small:error_estimate:um}
    \end{gather}
    where $C > 0$ is independent of $r_m$ and $\delta$.
\end{theorem}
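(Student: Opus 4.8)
The plan is to split the two error estimates, handling the last component $u_m$ by a direct comparison argument and the first $m-1$ components by an $L^2$ energy estimate, and then to upgrade the resulting $L^2$ convergence to uniform convergence by a compactness argument.

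First I would dispose of \eqref{eq:rm_small:error_estimate:um}. By \eqref{eq:cds:pde:K_form:single} we have $u_m = w^{(r_m,b_{mm},d_m,K_m)}(x,t,u_{0,m})$ with $K_m \ge 0$, and the constant $r_m/b_{mm}$ is a supersolution of \eqref{eq:fisherKPP_with_K} since its reaction term there equals $-K_m (r_m/b_{mm}) \le 0$. As $u_{0,m} \le r_m/b_{mm}$ by \eqref{eq:varying_initial_data}, the comparison principle gives $0 \le u_m \le r_m/b_{mm}$ on $\clOmega \times \left[0,\infty\right)$, which is \eqref{eq:rm_small:error_estimate:um}. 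Before the main estimate I would also record that the $(m-1)$-species system \eqref{eq:rm_to_zero:m-1_system} admits a unique classical solution $\widehat{\bm{v}}$ by Theorem~\ref{thm:existence}, and that both $\bm{v}$ and $\widehat{\bm{v}}$ are bounded by constants $M_i$ independent of $r_m$, since $r_i$, $b_{ii}$ and $\bm{v}_0$ do not depend on $r_m$ for $i \le m-1$.

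For \eqref{eq:rm_small:error_estimate:v} I would set $\varphi_i = u_i - \widehat{v}_i$ and subtract \eqref{eq:rm_to_zero:m-1_system} from the first $m-1$ equations of \eqref{eq:cds:pde}, keeping the term $b_{im} u_m$ explicit. Each $\varphi_i$ then solves a linear parabolic equation whose source splits into a reaction difference and the genuine perturbation $-b_{im} u_m u_i$. Because both states lie in the fixed compact box $\prod_i \left[0,M_i\right]$, the reaction difference is controlled pointwise by $L\sum_j \lvert\varphi_j\rvert$ with $L$ depending only on the $M_i$ and the $b_{ij}$, while the perturbation is bounded in $L^\infty$ by $b_{im} M_i r_m / b_{mm}$ thanks to the bound on $u_m$ just proved. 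Multiplying by $\varphi_i$, integrating over $\upOmega$, using Green's identity with the zero-flux conditions to discard the non-positive gradient term, summing over $i$ and applying Young's inequality, I would reach a differential inequality $\tfrac{d}{dt}\Phi \le C_1 \Phi + C_2 r_m^2$ for $\Phi = \sum_i {\lVert\varphi_i\rVert}_{L^2(\upOmega)}^2$. Since $\bm{v}$ and $\widehat{\bm{v}}$ share the initial datum $\bm{v}_0$, we have $\Phi(0)=0$, so Gronwall's lemma gives $\Phi(t) \le C r_m^2$ on $\left[0,T\right]$ with $C$ depending on $T$ but not on $r_m$ or $\delta$, which is \eqref{eq:rm_small:error_estimate:v}.

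The hard part is upgrading this $L^2$ bound to uniform convergence on $\clOmega \times \left[\delta,T\right]$, the restriction $t \ge \delta > 0$ being forced by the mere continuity of $\bm{v}_0$. Here I would argue by compactness rather than by a rate. The family $\{\bm{v}\}_{r_m}$ is bounded in $L^\infty$ uniformly in $r_m$ and the corresponding reaction terms are bounded, so interior-and-boundary parabolic regularity (using the smoothness of $\partial\upOmega$) yields a uniform-in-$r_m$ Hölder bound for $\bm{v}$ on $\clOmega \times \left[\delta,T\right]$. By Arzelà–Ascoli every sequence $r_m \to 0$ has a subsequence along which $\bm{v}$ converges uniformly on $\clOmega \times \left[\delta,T\right]$; the estimate \eqref{eq:rm_small:error_estimate:v} forces every such limit to coincide with $\widehat{\bm{v}}$, and since $u_m \to 0$ uniformly by \eqref{eq:rm_small:error_estimate:um}, the full family $(\bm{v},u_m)$ converges to $(\widehat{\bm{v}},0)$ uniformly on $\clOmega \times \left[\delta,T\right]$. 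The one delicate point is extracting the Hölder bound independently of $r_m$; this is available precisely because the singular term $b_{im} u_m$ enters the coefficients boundedly (indeed it is $O(r_m)$), so no loss of regularity occurs as $r_m \to 0$.
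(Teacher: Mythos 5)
Your proof is correct and takes essentially the same route as the paper's: the bound $0 \le u_m \le r_m/b_{mm}$ by comparison, an $L^2$ energy estimate combined with Gronwall's lemma for $\bm{v}-\widehat{\bm{v}}$ (exploiting that both solutions stay in a bounded set independent of $r_m$), and uniform parabolic Hölder estimates plus Ascoli--Arzelà to upgrade $L^2$ convergence to uniform convergence on $\clOmega\times\left[\delta,T\right]$. The only differences are presentational: you work componentwise and keep the perturbation $-b_{im}u_m u_i$ explicit where the paper uses vector notation and Lipschitz continuity of the full reaction term, and you spell out the subsequence-identification step that the paper leaves implicit.
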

\begin{proof}
    We first observe that \eqref{eq:rm_small:error_estimate:um}
    immediately follows from \eqref{eq:boundedness} and \eqref{eq:varying_initial_data},
    since they imply that for every $r_m > 0$ we have
    \begin{equation}
        0 \leq u_m \leq \frac{r_m}{b_{mm}} \quad \text{in } \upOmega \times \left[ 0, T \right].
        \label{eq:rm_to_zero:convegence_of_um_bounds}
    \end{equation}
    In particular,
    this means that
    $u_m \to 0$
    uniformly on $\clOmega \times \left[\delta,T\right]$
    as $r_m \to 0$.


    Now we will prove convergence of the first $m-1$ components.
    In this proof we will denote by $\left\langle\cdot,\cdot\right\rangle$ and $\left\lvert\cdot\right\rvert$
    the scalar product and the norm of $L^2(\upOmega,\mathbb{R}^{m-1})$ respectively.
    For simplifying the computation, we will use the following vector
    notation for the system \eqref{eq:rm_to_zero:m-1_system}:
    \begin{equation}
        \widehat{\bm{v}}_t
        =
          \bm{D} \, \upDelta \widehat{\bm{v}}
        + \bm{f}(\widehat{\bm{v}},0)
        \quad\text{in }\upOmega\times\left(0,T\right),
        \label{eq:rm_to_zero:m-1_system:vector}
    \end{equation}
    where $\bm{D} = \diag\left\{d_1,\dots,d_{m-1}\right\}$
    is the diagonal diffusion matrix
    and $\bm{f}:\mathbb{R}^m \to \mathbb{R}^{m-1}$
    is the reaction term
    defined as
    \begin{equation*}
        f_i(\widehat{\bm{v}},u_m)
        =
        \left( r_i - \sum_{j=1}^{m-1} b_{ij} \widehat{v}_i - b_{im} u_m \right) \widehat{v}_i,
        \quad
        i=1,\dots,m-1.
    \end{equation*}
    It can be easily seen that $\bm{v}$ satisfies instead
    \begin{equation}
        \bm{v}_t
        =
          \bm{D} \, \upDelta \bm{v}
        + \bm{f}(\bm{v},u_m)
        \quad\text{in }\upOmega\times\left(0,T\right).
        \label{eq:rm_to_zero:m-1_subsystem:vector}
    \end{equation}

    Let $\bm{\varphi} = \bm{v} - \widehat{\bm{v}}$.
    By subtracting \eqref{eq:rm_to_zero:m-1_subsystem:vector} and \eqref{eq:rm_to_zero:m-1_system:vector}
    we get
    \begin{equation*}
        \bm{\varphi}_t = \bm{D} \, \upDelta \bm{\varphi} + \bm{f}(\bm{v},u_m) - \bm{f}(\widehat{\bm{v}},0).
    \end{equation*}
    By taking the scalar product by $\bm{\varphi}$ in $L^2(\upOmega,\mathbb{R}^{m-1})$,
    i.e., by multiplication by $\bm{\varphi}$ and integration on $\upOmega$,
    we obtain
    \begin{equation}
        \frac{1}{2} \frac{\mathrm{d}}{\mathrm{d}t}\left\lvert\bm{\varphi}\right\rvert^{2}
        =
        \left\langle \bm{\varphi}_t, \bm{\varphi} \right\rangle
        =
        \left\langle \bm{D} \, \upDelta \bm{\varphi}, \bm{\varphi} \right\rangle
        + \left\langle \bm{f}(\bm{v},u_m) - \bm{f}(\widehat{\bm{v}},0) , \bm{\varphi} \right\rangle.
        \label{eq:rm_to_zero:norm_phi_equation}
    \end{equation}
    By Green's first identity we have
    \begin{equation}
        - \left\langle \bm{D} \, \upDelta \bm{\varphi}, \bm{\varphi} \right\rangle
        =
        \left\langle \bm{D} \, \nabla \bm{\varphi}, \nabla \bm{\varphi} \right\rangle
        \ge
        d_{\min} \left\langle \nabla \bm{\varphi}, \nabla \bm{\varphi} \right\rangle
        =
        d_{\min} \, {\left\lvert \nabla \bm{\varphi} \right\rvert}^2
        \ge
        0,
        \label{eq:rm_to_zero:norm_phi_equation:green}
    \end{equation}
    where $d_{\min} = \min_{i=1}^{m-1} d_i$,
    and thus
    \begin{equation*}
        \frac{\mathrm{d}}{\mathrm{d}t} \, {\left\lvert \bm{\varphi}\right\rvert}^2
        \le
        2 \, \lvert \bm{f}(\bm{v},u_m) - \bm{f}(\widehat{\bm{v}},0) \rvert \, \lvert\bm{\varphi}\rvert
        \le
        {\lvert \bm{f}(\bm{v},u_m) - \bm{f}(\widehat{\bm{v}},0) \rvert}^2 +  {\lvert\bm{\varphi}\rvert}^2,
    \end{equation*}
    where in the first inequality we have applied \eqref{eq:rm_to_zero:norm_phi_equation:green}
    and the Cauchy-Schwarz inequality to \eqref{eq:rm_to_zero:norm_phi_equation}.

    In the rest of the proof
    we will suppose that $r_m < 1$;
    since we are considering the limit for $r_m \to 0$,
    this can be done without any loss of generality.
    Then, since $\bm{v}_0$ is independent of $r_m$,
    it follows from \eqref{eq:boundedness}
    that $(\bm{v},u_m)$ assumes values
    in a bounded set $\mathscr{D}$
    independent of $r_m$.
    Since $\bm{f}$ is smooth,
    its restriction to $\mathscr{D}$ is Lipschitz continuous.
    Then, there exists $C'>0$ independent of $r_m$ such that
    \begin{align*}
        \frac{\mathrm{d}}{\mathrm{d}t} {\left\lvert \bm{\varphi}\right\rvert}^2
        &\le
        C' {\left\lvert (\bm{v},u_m) - (\widehat{\bm{v}},0) \right\rvert}^2 + {\left\lvert\bm{\varphi}\right\rvert}^2
        \le
        C' {\left\lvert (\bm{\varphi},u_m) \right\rvert}^2 + {\left\lvert\bm{\varphi}\right\rvert}^2
        \\ &=
        C' \left(
        {\left\lvert \bm{\varphi} \right\rvert}^2
        + {\left\lvert u_m \right\rvert}^2
        \right) + {\left\lvert\bm{\varphi}\right\rvert}^2
        \le
        (C'+1) \, {\left\lvert \bm{\varphi} \right\rvert}^2
        + C' {\left(\frac{r_m}{b_{mm}}\right)}^2 \left\lvert\upOmega\right\rvert,
    \end{align*}
    where \eqref{eq:rm_to_zero:convegence_of_um_bounds} has been substituted
    in the last inequality.
    By applying the Gronwall lemma and using the fact that ${\left\lvert \bm{\varphi}(0) \right\rvert} = 0$ since the initial data is the same for both $\bm{v}$ and $\widehat{\bm{v}}$, we obtain
    \begin{equation*}
        {\left\lvert \bm{\varphi}(t) \right\rvert}^2
        \le
        \frac{C'}{C'+1} {\left(\frac{r_m}{b_{mm}}\right)}^2 \left\lvert\upOmega\right\rvert \left( e^{(C'+1)  t} - 1\right).
    \end{equation*}
    Then,
    for $C$ such that $C^{2} = C' \, b_{mm}^{-2} \left\lvert\upOmega\right\rvert ( e^{(C'+1)T} - 1 ) / (C'+1)$
    we have that ${\left\lvert \bm{\varphi}(t) \right\rvert} \le C \, r_m$ for all $t \in \left(0,T\right)$,
    from which the error estimate \eqref{eq:rm_small:error_estimate:v} follows.
    As a consequence, we have that $\bm{v} \to \widehat{\bm{v}}$ in $L^\infty(0,T;L^2(\upOmega,\mathbb{R}^{m-1}))$ as $r_m \to 0$.

    Moreover, it follows from \cite[Theorem~5.1.17]{lunardi}
    that for every $\alpha \in\left(0,1\right)$
    there exists a positive constant $C''$ independent of $r_m$
    such that
    \begin{equation*}
        {\left\lVert \bm{v} \right\lVert}_{\mathcal{C}^{\alpha,\frac{\alpha}{2}}(\clOmega \times [\delta,T])} \le C''.
    \end{equation*}
    Thus $\left\{\bm{v}\right\}$ is bounded and equicontinuous in $\mathcal{C}(\clOmega\times[\delta,T])$.
    Applying the Ascoli-Arzelà theorem,
    we deduce that $\bm{v}$ converges uniformly to $\widehat{\bm{v}}$ on $\clOmega\times\left[\delta,T\right]$
    as $r_m \to 0$.
\ifspringer\qed\fi
\end{proof}

We expect that
the $m$-th species will become extinct in the long run
even if $r_m$ is positive but sufficiently small.
We give a proof of this fact
under the additional assumption
that the total density $\sum_{i=1}^m u_i$
becomes bounded from below by a strictly positive value
for large enough times.
Such a property roughly means that,
at any given space position,
at least one species will survive in the long run,
which seems to be a reasonable outcome
for the mathematical model of species competition \eqref{eq:cds:pde}.
Such lower bounds have been recently proved
for the one-dimensional traveling wave equation associated to \eqref{eq:cds:pde}
by using an N-barrier comparison principle \wrapcite{\cite{nbarrier}},
but, as far as we are aware of, the proof for the parabolic case is still an open problem.
However, if all diffusion coefficients are equal,
we can apply the comparison principle
and easily obtain a lower bound.
\begin{lemma}
    \label{lem:non_extinction}
    Let $\bm{u}$ be a classical solution of \hyperref[eq:cds]{(P)}
    with a non-negative initial function $\bm{u}_0$ not identically equal to zero,
    equal diffusion coefficients $d_1 = \dots = d_m = d$
    and intrinsic growth rates $r_i$ depending on space and time.
    Moreover, suppose that the $r_i$'s are bounded away from zero,
    that is,
    for each $i=1,\dots,m$
    there exists $r_{i,\min} > 0$
    such that $r_i \ge r_{i,\min}$ in $\clOmega \times \left[0,\infty\right)$.

    Let $\bm{\beta} \in \mathbb{R}^m$
    be such that $\beta_i > 0$ for all $i=1,\dots,m$
    and let $\varphi=\sum_{i=1}^m \beta_i u_i$.
    Then, there exists $\xi > 0$
    such that
    for every $\delta \in \left(0,\xi\right)$
    there exists $t_2 = t_2(\delta) > 0$ such that
    $\varphi > \xi - \delta$
    in $\clOmega \times \left[ t_2, \infty \right)$.
    Moreover,
    we can choose $\xi$ independent of the initial condition $\bm{u}_0$
    and depending on the growth rates $r_i$
    only through their lower bounds $r_{i,\min}$.
    A possible choice of the constant $\xi$ is given by
    \begin{equation*}
        \xi = \left( \min_{i=1,\dots,m} r_{i,\min} \right)
              \left( \min_{i,j=1,\dots,m} \frac{\beta_j}{b_{ij}} \right).
    \end{equation*}
\end{lemma}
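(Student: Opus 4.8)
The plan is to exploit the assumption of equal diffusion coefficients to collapse the argument to a scalar comparison with a logistic (Fisher--KPP) equation. First I would derive the evolution equation satisfied by $\varphi = \sum_{i=1}^m \beta_i u_i$. Since all diffusivities coincide, summing $\beta_i$ times the $i$-th equation of \eqref{eq:cds:pde} gives
\begin{equation*}
    \partial_t \varphi
    =
    d \, \upDelta \varphi
    + \sum_{i=1}^m \beta_i r_i u_i
    - \sum_{i,j=1}^m \beta_i b_{ij} u_i u_j,
\end{equation*}
together with zero-flux boundary conditions inherited from \eqref{eq:cds:bc}. (This step is exactly where equal diffusion is indispensable: otherwise $\varphi$ would not satisfy a single scalar equation.)

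The heart of the argument is to bound this reaction term from below so that it becomes logistic with the announced carrying capacity. Writing $\rho = \min_i r_{i,\min}$ and $\gamma = \min_{i,j} \beta_j / b_{ij}$ (finite and strictly positive, since every $b_{ii}>0$), I would estimate the linear part by $\sum_i \beta_i r_i u_i \ge \rho \sum_i \beta_i u_i = \rho\,\varphi$, using $r_i \ge r_{i,\min} \ge \rho$ and $u_i \ge 0$. For the quadratic part, the definition of $\gamma$ yields $b_{ij} \le \beta_j/\gamma$ for all $i,j$, whence, since every product $u_i u_j$ is non-negative,
\begin{equation*}
    \sum_{i,j=1}^m \beta_i b_{ij} u_i u_j
    \le
    \frac{1}{\gamma}\left(\sum_{i=1}^m \beta_i u_i\right)^2
    = \frac{\varphi^2}{\gamma}.
\end{equation*}
Combining the two estimates gives $\partial_t \varphi \ge d\,\upDelta\varphi + \varphi(\rho - \varphi/\gamma)$, so $\varphi$ is an upper solution of the Fisher--KPP equation \eqref{eq:convergence_to_cc:fisherKPP} with growth rate $\rho$, self-competition $1/\gamma$ and carrying capacity $\rho\gamma$. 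This identifies the candidate $\xi = \rho\gamma$, which is manifestly independent of $\bm{u}_0$ and depends on the $r_i$ only through their lower bounds, as required.

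To conclude I would combine the comparison principle with the scalar result already proved. Since $\bm{u}_0$ is not identically zero, some component has non-trivial initial data, so by strict positivity (cf.~\eqref{eq:solution_is_strictly_positive}, which persists for space–time-dependent $r_i$ via the strong maximum principle) the corresponding $u_i$, and hence $\varphi$, is strictly positive on $\clOmega$ at any time $\tau>0$. Taking $w^{(\rho,1/\gamma,d,0)}$ issued from $\varphi(\cdot,\tau)$ as a lower solution, the comparison principle gives $\varphi \ge w^{(\rho,1/\gamma,d,0)}$ on $\clOmega\times[\tau,\infty)$, and Theorem~\ref{thm:limits_for_fisherKPP}(i) shows the latter converges uniformly to $\rho\gamma = \xi$. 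Therefore $\liminf_{t\to\infty}\min_{\clOmega}\varphi \ge \xi$, which is precisely the assertion: for every $\delta\in(0,\xi)$ there exists $t_2(\delta)$ with $\varphi > \xi-\delta$ on $\clOmega\times[t_2,\infty)$. The main obstacle is engineering the quadratic estimate so that the self-competition constant $1/\gamma$ pairs with $\rho$ to reproduce exactly the stated carrying capacity; once this clean logistic lower bound is in place, the derivation of the scalar equation and the comparison step are routine.
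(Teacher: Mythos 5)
Your proof is correct and follows essentially the same route as the paper: with equal diffusivities you sum the equations, bound the reaction term of $\varphi$ below by the logistic nonlinearity $\varphi(\rho - \varphi/\gamma)$ with $\rho = \min_i r_{i,\min}$ and $\gamma = \min_{i,j}\beta_j/b_{ij}$, and then invoke the comparison principle together with Theorem~\ref{thm:limits_for_fisherKPP}~(i) to get $\liminf_{t\to\infty}\min_{\clOmega}\varphi \ge \xi = \rho\gamma$. The only cosmetic differences are that the paper determines $(\rho,\xi)$ by requiring $\varphi$ to be an upper solution term by term rather than by your direct estimate, and it starts the comparison at $t=0$ (where $\varphi(\cdot,0)\not\equiv 0$ suffices) instead of at a positive time $\tau$.
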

\begin{remark}
    We cannot let any $\beta_i$ be equal to zero,
    because a priori we do not know which subset of the species
    will not go extinct.
\end{remark}
\begin{remark}
    We will not discuss the existence and uniqueness of classical solutions of problem \hyperref[eq:cds]{(P)}
    in the case the growth rates are allowed to depend on space and/or time.
    As was the case for problem \hyperref[eq:fkpp_with_K]{(F-KPP-K)},
    this is not really needed since we will always apply Lemma~\ref{lem:non_extinction}
    to a subset of a solution of problem \hyperref[eq:cds]{(P)} with constant growth rates,
    whose existence and uniqueness was already established in Theorem~\ref{thm:existence}.
\end{remark}
\begin{proof}
    By multiplying the equation for $u_i$ in \eqref{eq:cds:pde} by $\beta_i$
    for all $i=1,\dots,m$ and then summing the resulting equations we obtain
    \begin{equation}
        \varphi_t
        =
        d \, \upDelta \varphi
        +
        \sum_{i=1}^m \beta_i \left( r_i - \sum_{j=1}^m b_{ij}u_j \right) u_i.
        \label{lem:non_extinction:equation_for_sum}
    \end{equation}
    Let $\phiunderbar$ be the solution
    to the initial problem for the Fisher-KPP reaction-diffusion equation
    \begin{equation}
        {\phiunderbar}_t = d \, \upDelta \phiunderbar + \rho \left( 1 - \frac{1}{\xi} \, \phiunderbar \right) \phiunderbar,
        \label{lem:non_extinction:lower_bound_equation}
    \end{equation}
    where $\rho,\xi > 0$,
    with zero-flux boundary conditions on $\partial\upOmega$
    and initial conditions
    $\phiunderbar(x,0) = \varphi(x,0)$
    for all $x \in \clOmega$.
    Since $\bm{u}_0$ is not identically equal to zero
    and $\bm{\beta} > \bm{0}$ componentwise,
    we have that $\varphi(x,0)$ is not identically equal to zero in $\upOmega$.

    We want to find suitable values for $\rho$ and $\xi$
    so that $\varphi$ is an upper solution of the initial value problem associated with \eqref{lem:non_extinction:lower_bound_equation}.
    Since the initial and boundary conditions coincide,
    we just need to find $\rho$ and $\xi$ such that
    \begin{equation*}
        \varphi_t \ge d \, \upDelta \varphi + \rho \left( 1 - \frac{1}{\xi} \, \varphi \right) \varphi.
    \end{equation*}
    By substituting \eqref{lem:non_extinction:equation_for_sum}, we see that
    this is true if and only if
    \begin{equation*}
        \rho \left( 1 - \frac{1}{\xi} \, \varphi \right) \varphi
        \le
        \sum_{i=1}^m \beta_i \left( r_i - \sum_{j=1}^m b_{ij} u_j \right) u_i.
    \end{equation*}
    By expanding $\varphi$ and performing the multiplications, this becomes
    \begin{equation*}
        \rho \, \sum_{i=1}^m     \beta_i u_i - \frac{\rho}{\xi} \, \sum_{i,j=1}^m \beta_i \beta_j u_i u_j
        \le
                \sum_{i=1}^m r_i \beta_i u_i -                     \sum_{i,j=1}^m b_{ij}  \beta_i u_i u_j.
    \end{equation*}
    Then, in order for $\varphi$ to be an upper solution,
    it suffices to have each of the sums on the left-hand side
    smaller than the corresponding one on the right-hand side.
    For the first sum, this is true if
    $\rho \le r_i$ in $\upOmega \times \left( 0, \infty \right)$ for every $i=1,\dots,m$.
    Since the $r_i$ are bounded from below, this can be reduced to
    $\rho \le r_{i,\min}$ for every $i=1,\dots,m$,
    which is satisfied by
    $\rho = \min_{i=1}^m r_{i,\min}$.
    In a similar way, the second sum yields
    $\xi = \rho \min_{i,j=1}^m \beta_j / b_{ij}$,
    where the minimum is taken only on the indices such that $b_{ij} > 0$.
    Note that since $r_{i,\min}, \beta_i > 0$ for all $i=1,\dots,m$,
    we also have $\rho,\xi > 0$.

    Then, we obtain that $\varphi \ge \phiunderbar$ by the comparison principle.
    Since
    $\phiunderbar = w^{(\rho,\rho/\xi,d,0)}$
    we can apply Theorem~\ref{thm:limits_for_fisherKPP}~(i)
    obtaining that
    $\phiunderbar$ converges to $\xi$
    uniformly on $\clOmega$ as $t\to\infty$.
    Since $\phiunderbar$ is a lower bound for $\varphi$,
    we conclude that for all $\delta\in\left(0,\xi\right)$
    there exists $t_2 = t_2(\delta) > 0$ such that $\varphi > \xi - \delta$
    in $\clOmega \times \left[t_2,\infty\right)$.
\ifspringer\qed\fi
\end{proof}

We conclude by showing that,
if the first $m-1$ diffusion coefficients are equal,
the $m$-th species will go extinct in the long run
whenever its intrinsic growth rate $r_m$ is sufficiently small.
We remark that in the proof we do not actually use the equal diffusion hypothesis,
except in order to apply Lemma~\ref{lem:non_extinction}.
Proving Lemma~\ref{lem:non_extinction} in the case of unequal diffusion coefficients
(possibly with a different choice of $\xi$)
is, to the best of our knowledge, still an open problem,
but it would allow us to immediately extend the following theorem to the general case.

\begin{theorem}
    Let $\bm{u}=(\bm{v},u_m)$
    be the classical solution of \hyperref[eq:cds]{(P)}
    with non-negative initial data $\bm{u}_0=(\bm{v}_0,u_{0,m})\in \mathcal{C}(\clOmega,\mathbb{R}^m)$
    such that $\bm{v}_0$ is not identically equal to $\bm{0}$.
    Suppose that
    $d_1 = \dots = d_{m-1} = d$
    and that $b_{mi} > 0$ for all $i=1,\dots,m-1$.

    Then, there exists $r_* > 0$ independent of the initial data
    such that for every $0 \le r_m < r_*$
    the function $u_m$
    converges to zero
    uniformly on $\clOmega$
    as $t \to \infty$.
\end{theorem}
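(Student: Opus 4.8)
The plan is to treat the equation for $u_m$ as an instance of problem~\hyperref[eq:fkpp_with_K]{(F-KPP-K)} and to show that, when $r_m$ is small, the competition term $K_m$ felt by the $m$-th species eventually dominates $r_m$, so that Theorem~\ref{thm:limits_for_fisher-kpp_with_K}~(i) forces $u_m \to 0$. By \eqref{eq:cds:pde:K_form:single} we have $u_m = w^{(r_m,b_{mm},d_m,K_m)}(\cdot,\cdot,u_{0,m})$ with $K_m = \sum_{i=1}^{m-1} b_{mi} u_i$. Since $b_{mi} > 0$ for every $i$, it holds that $K_m \ge (\min_{i} b_{mi}) \sum_{i=1}^{m-1} u_i$, so it suffices to produce a strictly positive lower bound, valid for large times and independent of $r_m$, for the total density $\sum_{i=1}^{m-1} u_i$ of the native species. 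Such a bound is exactly what Lemma~\ref{lem:non_extinction} provides, once applied to the $(m-1)$-species subsystem governed by $\bm{v}$.

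The difficulty is that $\bm{v}$ does not solve an autonomous competition-diffusion system: the $i$-th equation has effective growth rate $r_i - b_{im} u_m$, which depends on space and time through $u_m$. This is precisely why Lemma~\ref{lem:non_extinction} was stated for space- and time-dependent growth rates bounded away from zero. First I would invoke Lemma~\ref{lem:initial_data_can_be_taken_small} to find $t_1$ with $u_m < r_m/b_{mm} + \varepsilon$ on $\clOmega \times [t_1,\infty)$; choosing $\varepsilon$ and an upper threshold for $r_m$ small enough, the effective growth rates satisfy $r_i - b_{im} u_m \ge r_i/2 > 0$ there. Restarting time at $t_1$ (the restarted initial datum for $\bm{v}$ is strictly positive by \eqref{eq:solution_is_strictly_positive}, hence not identically zero, because $\bm{v}_0 \not\equiv \bm{0}$) and using $d_1 = \dots = d_{m-1} = d$, Lemma~\ref{lem:non_extinction} with weights $\beta_i = 1$ yields a constant $\xi > 0$ and, for each small $\delta$, a time $t_2$ such that $\sum_{i=1}^{m-1} u_i > \xi - \delta$ on $\clOmega \times [t_2,\infty)$.

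The crucial point is that the explicit value $\xi = (\min_i r_{i,\min})(\min_{i,j\le m-1} 1/b_{ij})$ furnished by the lemma depends on $r_m$ only through the lower bounds $r_{i,\min} = r_i/2$, which are themselves independent of $r_m$; thus $\xi$ is a fixed positive constant. Setting $\Kunderbar_m \coloneqq (\min_i b_{mi})(\xi - \delta)$ and fixing, say, $\delta = \xi/2$, I then define $r_*$ as the minimum of the threshold used above and $\Kunderbar_m$. For every $r_m < r_*$ we have $K_m \ge \Kunderbar_m \ge r_m$ on $\clOmega \times [t_2,\infty)$, so Theorem~\ref{thm:limits_for_fisher-kpp_with_K}~(i) (applied with starting time $t_2$ in place of $\delta$) gives the uniform convergence $u_m \to 0$. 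All thresholds were chosen using only the parameters and the fixed lower bounds $r_i/2$, never the initial data, so $r_*$ is independent of $\bm{u}_0$ as required.

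The main obstacle is the apparent circularity: the survival of the native species is needed to kill $u_m$, yet the native growth rates are themselves depressed by $u_m$. The resolution, and the only delicate bookkeeping in the argument, is the order in which the constants are fixed: smallness of $u_m$ is secured first and uniformly (in both $r_m$ and the initial data) via Lemma~\ref{lem:initial_data_can_be_taken_small}, which makes the non-extinction constant $\xi$ a genuine constant that $r_m$ can subsequently be compared against. The equal-diffusion hypothesis enters solely to license the comparison with a scalar Fisher-KPP lower solution inside Lemma~\ref{lem:non_extinction}; the remainder of the argument is indifferent to it.
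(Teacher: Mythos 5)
Your proposal is correct and follows essentially the same route as the paper's own proof: bound $u_m$ first via Lemma~\ref{lem:initial_data_can_be_taken_small} so the effective growth rates $r_i - b_{im}u_m$ of the natives stay bounded away from zero, then invoke Lemma~\ref{lem:non_extinction} (restarting time, with the equal-diffusion hypothesis) to get an $r_m$-independent lower bound on the natives' weighted density, and finally apply Theorem~\ref{thm:limits_for_fisher-kpp_with_K}~(i) to conclude $u_m \to 0$. The only cosmetic difference is that you take weights $\beta_i = 1$ and multiply by $\min_i b_{mi}$ afterwards, whereas the paper takes $\beta_i = b_{mi}$ so that $\varphi = K_m$ directly; the structure, constants, and order of quantifier choices are otherwise the same.
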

\begin{proof}
    We will first show by applying Lemma~\ref{lem:non_extinction}
    that if $r_m$ is sufficiently small
    then the first $m-1$ species cannot all go extinct.
    In particular, their total density will be bounded from below,
    independently of the value of $r_m$.
    The function $\bm{u}$ satisfies the competition-diffusion system \eqref{eq:cds:pde}, that is
    \begin{equation*}
        \partial_t u_i
        =
        d_i \, \upDelta u_i + \left( r_i - \sum_{j=1}^{m} b_{ij} u_j \right) u_i,
        \quad
        \text{for all } i=1,\dots,m.
    \end{equation*}
    Let $\widetilde{r}_i = r_i - b_{im} u_m$ for $i=1,\dots,m-1$.
    Then, we rewrite the equations for the first $m-1$ species as
    \begin{equation}
        \partial_t u_i
        =
        d \, \upDelta u_i + \left( \widetilde{r}_i - \sum_{j=1}^{m-1} b_{ij} u_j \right) u_i,
        \quad
        \text{for all } i=1,\dots,m-1,
        \label{eq:rm_small:m_minus_1_species_system}
    \end{equation}
    where the coupling with the $m$-th equations
    is now contained in the non-constant growth rates $\tilde{r}_i$'s.
    In order to apply Lemma~\ref{lem:non_extinction} to \eqref{eq:rm_small:m_minus_1_species_system},
    the $\widetilde{r}_i$'s must be bounded away from zero.
    We will show that this is true,
    but only when $r_m$ is small enough and in general only for times larger than a certain $t_1 \ge 0$.

    First, we search for $r_{\sharp}, \rho, \varepsilon > 0$ satisfying
    \begin{equation*}
        r_i - b_{im} \left(\frac{r_m}{b_{mm}} + \varepsilon \right) > \rho,
        \quad
        \text{for all }  i=1,\dots,m-1,
        \text{ for all } r_m \in \left[0,r_{\sharp}\right).
    \end{equation*}
    In the case $b_{im} = 0$ for all $i=1,\dots,m-1$,
    we can choose $\rho < \min_{i=1}^{m-1} r_i$ and $r_{\sharp}$ and $\varepsilon$ arbitrarily.
    Otherwise, $b_{im} > 0$ for at least one $i=1,\dots,m-1$
    and we can take
    \begin{equation*}
        r_{\sharp} = b_{mm} \left( \min_{i=1,\dots,m-1} \frac{r_i - \rho}{b_{im}} - \varepsilon \right),
    \end{equation*}
    where the minimum is taken only on the indices $i$ such that $b_{im} > 0$.
    Since $r_i > 0$ for all $i=1,\dots,m-1$,
    if we take $\rho$ and $\varepsilon$ sufficiently small
    we have that $r_{\sharp}$ is strictly positive as required.
    Note that $r_{\sharp}$, $\rho$ and $\varepsilon$
    can be chosen independently of the initial data $\bm{u}_0$.

    Now suppose that $r_m < r_{\sharp}$.
    By Lemma~\ref{lem:initial_data_can_be_taken_small}
    there exists a time $t_1 = t_1(\varepsilon) \ge 0$
    such that
    \begin{equation*}
        u_m < \frac{r_m}{b_{mm}} + \varepsilon
        \quad
        \text{in }
        \clOmega \times \left[ t_1,\infty \right),
    \end{equation*}
    and thus, for all $i=1,\dots,m-1$, we have that
    \begin{equation*}
        \widetilde{r}_i > r_i - b_{im} \left( \frac{r_m}{b_{mm}} + \varepsilon\right) > \rho
        \quad
        \text{in }
        \clOmega \times \left[ t_1,\infty \right).
    \end{equation*}
    We have thus obtained a lower bound for $\widetilde{r}_i$, $i=1,\dots,m-1$,
    which is satisfied when $t$ is large enough
    and which is independent of the choice of $r_m \in \left[0,r_{\sharp}\right)$
    and of the initial conditions.

    Then, we apply Lemma~\ref{lem:non_extinction}
    to the $(m-1)$-species competition-diffusion system \eqref{eq:rm_small:m_minus_1_species_system}
    with $\beta_i = b_{mi} > 0$, $i=1,\dots,m-1$,
    restarting time from the time $t_1$.
    This is possible because of the lower bound for the intrinsic growth rates just obtained
    and thanks to the fact that $\bm{v}(\cdot,t_1)$ is not identically zero
    by \eqref{eq:solution_is_strictly_positive}.
    As a result, there exists
    a value $\xi > 0$
    (independent of $r_m$ and of the initial conditions $\bm{u}_0$)
    and a time $t_2$
    such that
    \begin{equation}
      \sum_{i=1}^{m-1} b_{mi} u_i > \xi \quad \text{in } \clOmega \times \left[ t_2, \infty \right).
      \label{eq:rm_small:lowerbound_for_m_minus_1_species}
    \end{equation}

    We set $r_* = \min\left\{r_{\sharp},\xi \right\}$
    or, in the case $r_{\sharp}$ could be chosen arbitrarily,
    $r_* = \xi$.
    As the last step, we need to show
    that if $r_m \in \left[ 0, r_* \right)$
    the $m$-th species disappears as $t\to\infty$.
    By \eqref{eq:cds:pde:K_form:single},
    we have that
    $u_m(x,t) = w^{(r_m,b_{mm},d_m,K_m)}(x,t,u_{0,m})$.
    By \eqref{eq:rm_small:lowerbound_for_m_minus_1_species}
    and the definition of $r_*$,
    we have that $K_m = \sum_{i=1}^{m-1} b_{mi} u_i > \xi \ge r_* > r_m$
    in $\clOmega \times \left[ t_2, \infty \right)$.
    Then, the proof is completed
    by applying Theorem~\ref{thm:limits_for_fisher-kpp_with_K}~(i)
    with the lower bound $\Kunderbar \coloneqq \xi$ for $K_m$.
\ifspringer\qed\fi
\end{proof}

\ifspringer
\bibliographystyle{spbasic}      
\bibliography{bibliography}      
\else
\section*{Acknowledgements}
\MyAck

\printbibliography{}
\fi

\end{document}